\documentclass{sig-alternate-2013}

\pdfoutput=1

\permission{Copyright is held by the International World Wide Web Conference Committee (IW3C2). IW3C2 reserves the right to provide a hyperlink to the author's site if the Material is used in electronic media.}
\conferenceinfo{WWW 2015,}{May 18--22, 2015, Florence, Italy.}
\copyrightetc{ACM \the\acmcopyr}
\crdata{978-1-4503-3469-3/15/05. \\
http://dx.doi.org/10.1145/2736277.2741096}

\clubpenalty=10000
\widowpenalty = 10000

\usepackage{times,amsmath,epsfig}
\usepackage[english]{babel}
\usepackage{graphicx, subfigure}
\usepackage{url}
\usepackage{epstopdf}

\newtheorem{theorem}{Theorem}

\title{Effective Techniques for Message Reduction and Load Balancing in Distributed Graph Computation}
\author{
{Da Yan{\small $~^{\#1}$}, James Cheng{\small $~^{\#2}$}, Yi Lu{\small $~^{\#3}$}, Wilfred Ng{\small $~^{*4}$} }
\vspace{1.6mm}\\
\fontsize{10}{10}\selectfont\itshape
$^{\#}$\,Department of Computer Science and Engineering, The Chinese University of Hong Kong\\
\fontsize{9}{9}\selectfont\ttfamily\upshape
\{$^{1}$\,yanda, $^{2}$\,jcheng, $^{3}$\,ylu\}@cse.cuhk.edu.hk
\vspace{1.2mm}\\
\fontsize{10}{10}\selectfont\rmfamily\itshape
$^{*}$\,Department of Computer Science and Engineering, The Hong Kong University of Science and Technology\\
\fontsize{9}{9}\selectfont\ttfamily\upshape
$^{4}$\,wilfred@cse.ust.hk
}

\begin{document}

\maketitle

\begin{abstract}
Massive graphs, such as online social networks and communication networks,  have become common today. To efficiently analyze such large graphs, many distributed graph computing systems have been developed. These systems employ the ``think like a vertex'' programming paradigm, where a program proceeds in iterations and at each iteration, vertices exchange messages with each other. However, using Pregel's simple message passing mechanism, some vertices may send/receive significantly more messages than others due to either the high degree of these vertices or the logic of the algorithm used. This forms the communication bottleneck and leads to imbalanced workload among machines in the cluster. In this paper, we propose two effective message reduction techniques: (1)vertex mirroring with message combining, and (2)an additional request-respond API. These techniques not only reduce the total number of messages exchanged through the network, but also bound the number of messages sent/received by any single vertex. We theoretically analyze the effectiveness of our techniques, and implement them on top of our open-source Pregel implementation called Pregel+. Our experiments on various large real graphs demonstrate that our  message reduction techniques significantly improve the performance of distributed graph computation.
\end{abstract}

\category{D.4.7}{Organization and Design}{Distributed systems}
\terms{Performance}
\keywords{Pregel; distributed graph computing; graph analytics}

\section{Introduction}\label{sec:intro}
With the growing interest in analyzing large real-world graphs such as online social networks, web graphs and semantic web graphs, many distributed graph computing systems~\cite{giraph,powergraph,mizan,graphlab,pregel,gps,ShangY13icde,blogel_vldb} have emerged. These systems are deployed in a shared-nothing distributed computing infrastructure usually built on top of a cluster of low-cost commodity PCs. Pioneered by Google's Pregel~\cite{pregel}, these systems adopt a vertex-centric computing paradigm, where programmers think naturally like a vertex when designing distributed graph algorithms. A Pregel-like system also takes care of fault recovery and scales to arbitrary cluster size without the need of changing the program code, both of which are indispensable properties for programs running in a cloud environment.

MapReduce~\cite{DeanG04osdi}, and its open-source implementation Hadoop, are also popularly used for large scale graph processing. However, many graph algorithms are intrinsically iterative, such as the computation of PageRank, connected components, and shortest paths. For iterative graph computation, a Pregel program is much more efficient than its MapReduce counterpart~\cite{pregel}.

\vspace{2mm}

\noindent{\bf Weaknesses of Pregel.} Although Pregel's vertex-centric computing model has been widely adopted in most of the recent distributed graph computing systems~\cite{giraph,graphlab,mizan,gps} (and also inspired the edge-centric model~\cite{powergraph}), Pregel's vertex-to-vertex message passing mechanism often causes bottlenecks in communication when processing real-world graphs.

To clarify this point, we first briefly review how Pregel performs message passing. In Pregel, a vertex $v$ can send messages to another vertex $u$ if $v$ knows $u$'s vertex ID. In most cases, $v$ only sends messages to its neighbors whose IDs are available from $v$'s adjacency list. But there also exist Pregel algorithms in which a vertex $v$ may send messages to another vertex that is not a neighbor of $v$~\cite{ppa_vldb,scc_pregel}. These algorithms usually adopt pointer jumping (or doubling), a technique that is widely used in designing PRAM algorithms~\cite{ShiloachV82jal}, to bound the number of iterations by $O(\log|V|)$, where $|V|$ refers to the number of vertices in the graph.

The problem with Pregel's message passing mechanism is that a small number of vertices, which we call {\em bottleneck vertices}, may send/receive much more messages than other vertices. A bottleneck vertex not only generates heavy communication, but also significantly increases the workload of the machine in which the vertex resides, causing highly imbalanced workload among different machines. Bottleneck vertices are common when using Pregel to process real-world graphs, mainly due to either (1)high vertex degree or (2)algorithm logic, which we elaborate more as follows.

We first consider the problem caused by high vertex degree. When a high-degree vertex sends messages to all its neighbors, it becomes a bottleneck vertex. Unfortunately, real-world graphs usually have highly skewed degree distribution, with some vertices having very high degrees. For example, in the {\em Twitter} who-follows-who graph\footnote[1]{\scriptsize http://law.di.unimi.it/webdata/twitter-2010/}, the maximum degree is over 2.99M while the average degree is only 35. Similarly, in the {\em BTC} dataset used in our experiments, the maximum degree is over 1.6M while the average degree is only 4.69.

We ran {\em Hash-Min}~\cite{cc2013icde,ppa_vldb}, a distributed algorithm for computing connected components (CCs), on the degree-skewed {\em BTC} dataset in a cluster with 1 master (Worker~0) and 120 slaves (Workers 1--120), and observed highly imbalanced workload among different workers, which we describe next. Pregel assigns each vertex to a worker by hashing the vertex ID regardless of the degree of the vertex. As a result, each worker holds approximately the same number of vertices, but the total number of neighbors in the adjacency lists (i.e., number of edges) varies greatly among different workers. In the computation of {\em Hash-Min} on {\em BTC}, we observed an uneven distribution of edge number among workers, as some workers contain more high-degree vertices than other workers. Since messages are sent along the edges, the uneven distribution of edge number also leads to an uneven distribution of the amount of communication among different workers. In Figure~\ref{msgNum_btc}, the taller blue bars indicate the total number of messages sent by each worker during the entire computation of {\em Hash-Min}, where we observe highly uneven communication workload among different workers.

Bottleneck vertices may also be generated by program logic. An example is the {\em S-V} algorithm proposed in~\cite{ppa_vldb,ShiloachV82jal} for computing CCs, which we will describe in detail in Section~\ref{ssec:sv}. In {\em S-V}, each vertex $v$ maintains a field $D[v]$ which records the vertex that $v$ is to communicate with. The field $D[v]$ may be updated at each iteration as the algorithm proceeds; and when the algorithm terminates,  vertices $v_i$ and $v_j$ are in the same CC iff $D[v_i]=D[v_j]$. Thus, during the computation, some vertex $u$ may communicate with many vertices $\{v_1, v_2, \ldots, v_k\}$ in its CC if $u=D[v_i]$, for $1 \le i \le k$. In this case, $u$ becomes a bottleneck vertex.

We ran {\em S-V} on the USA road network in a cluster with 1 master (Worker~0) and 60 slaves (Workers 1--60), and observed highly imbalanced communication workload among different workers. In Figure~\ref{msgNum_usa}, the taller blue bars indicate the total number of messages sent by each worker during the entire computation of {\em S-V}, where we can see that the communication workload is very biased (especially at Worker~0). We remark that the imbalanced communication workload is not caused by skewed vertex degree distribution, since the largest vertex degree of the USA road network is merely 9. Rather, it is because of the algorithm logic of {\em S-V}. Specifically, since the USA road network is connected, in the last round of {\em S-V}, all vertices $v$ have $D[v]$ equal to Vertex~0, indicating that they all belong to the same CC. Since Vertex~0 is hashed to Worker~0, Worker~0 sends much more messages than the other workers, as can be observed from Figure~\ref{msgNum_usa}.

In addition to the two problems mentioned above, Pregel's message passing mechanism is also not efficient for processing graphs with (relatively) high average degree due to the high overall communication cost. However, many real-world graphs such as social networks and mobile phone networks have relatively high average degree, as a person is often connected to at least dozens of people.

\vspace{1.7mm}

\noindent{\bf Our Solution.} In this paper, we solve the problems caused by Pregel's message passing mechanism with two effective message reduction techniques. The goals are to \emph{(1)mitigate the problem of imbalanced workload by eliminating bottleneck vertices}, and to \emph{(2)reduce the overall number of messages exchanged through the network}.

The first technique is called \textbf{mirroring}, which is designed to eliminate bottleneck vertices caused by high vertex degree. The main idea is to construct mirrors of each high-degree vertex in different machines, so that messages from a high-degree vertex are forwarded to its neighbors by its mirrors in local machines. Let $d(v)$ be the degree of a vertex $v$ and $M$ be the number of machines in the cluster, mirroring bounds the number of messages sent by $v$ each time to $\min\{M,d(v)\}$. If $v$ is a high-degree vertex, $d(v)$ can be up to millions, but $M$ is normally only from tens to a few hundred. We remark that ideas similar to mirroring have been adopted by existing systems~\cite{graphlab,gps}, but we find that mirroring a vertex does not always reduce the number of messages due to Pregel's use of message combiner~\cite{pregel}. Hence, we provide a theoretical analysis on which vertices should be selected for mirroring in Section~\ref{sec:mirror}.

\begin{figure}[!t]
    \centering
   \includegraphics[width=\columnwidth]{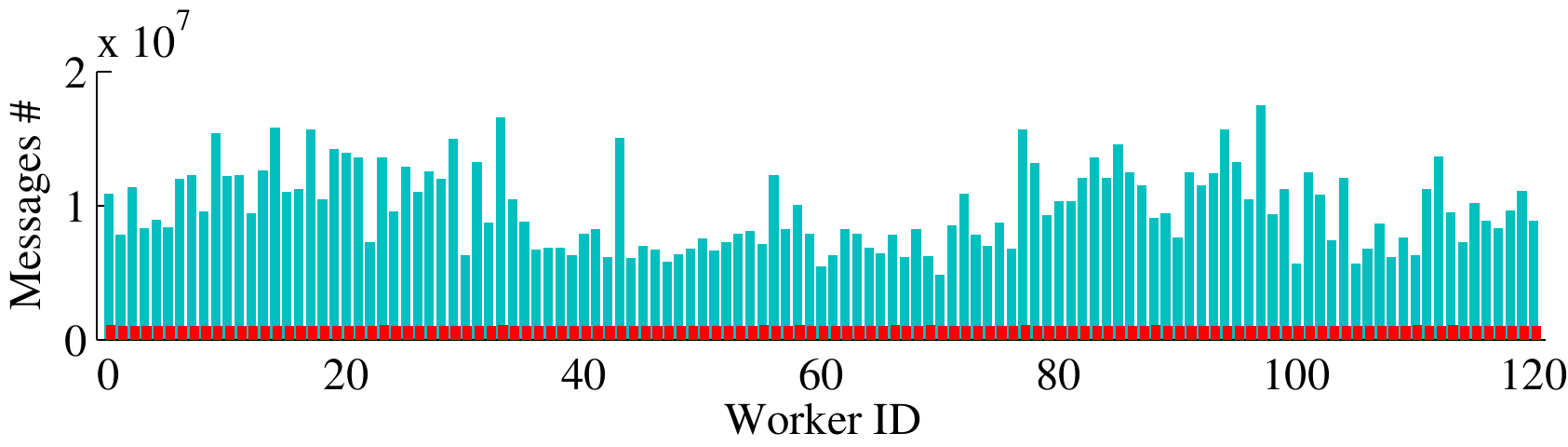}
    \vspace{-7mm}
        \caption{{\em Hash-Min} on BTC (with/without mirroring)}\label{msgNum_btc}
    \vspace{-3mm}
\end{figure}

\begin{figure}[!t]
    \centering
   \includegraphics[width=\columnwidth]{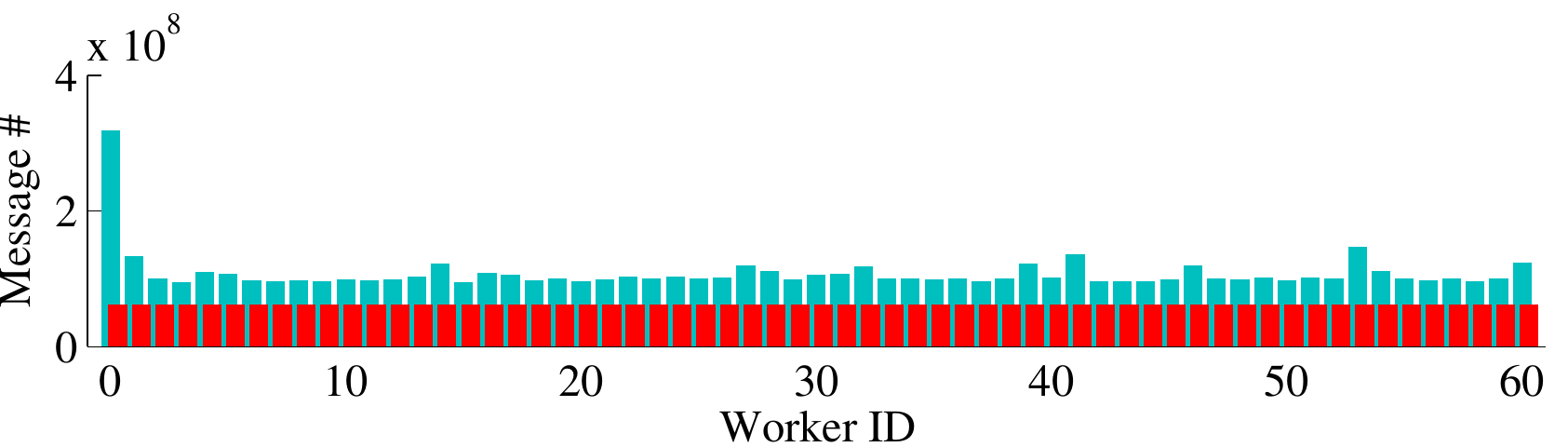}
    \vspace{-7mm}
        \caption{{\em S-V} on USA (with/without request-respond)}\label{msgNum_usa}
    \vspace{-3mm}
\end{figure}

In Figure~\ref{msgNum_btc}, the short red bars indicate the total number of messages sent by each worker when mirroring is applied to all vertices with degree at least 100. We can clearly see the big difference between the uneven blue bars (without mirroring) and the even-height short red bars (with mirroring). Furthermore, the number of messages is also significantly reduced by mirroring. We remark that the algorithm is still the same and mirroring is completely transparent to users. Mirroring reduces the running time of {\em Hash-Min} on BTC from 26.97 seconds to 9.55 seconds.

The second technique is a new \textbf{request-respond paradigm}. We extend the basic Pregel framework by an additional request-respond functionality. A vertex $u$ may request another vertex $v$ for its attribute $a(v)$, and the requested value will be available in the next iteration. The request-respond programming paradigm simplifies the coding of many Pregel algorithms, as otherwise at least three iterations are required to explicitly code each request and response process. More importantly, the request-respond paradigm effectively eliminates the bottleneck vertices resulted from algorithm logic, by bounding the number of response messages sent by any vertex to $M$. Consider the {\em S-V} algorithm mentioned earlier, where a set of $k$ vertices $\{v_1, v_2, \ldots, v_k\}$ with $D[v_i]=u$ require the value of $D[u]$ from $u$ (thus there are $k$ requests and responses). Under the request-respond paradigm, all the requests from a machine to the same target vertex are merged into one request. Therefore, at most $\min\{M, k\}$ requests are needed for the $k$ vertices and at most $\min\{M, k\}$ responses are sent from $u$. For large real-world graphs, $k$ is often orders of magnitude greater than $M$.

In Figure~\ref{msgNum_usa}, the short red bars indicate the total number of messages sent by each worker when the request-respond paradigm is applied. Again, the skewed message passing represented by the blue bars are now replaced by the even-height short red bars. In particular, Vertex~0 now only responds to the requesting workers instead of all the requesting vertices in the last round, and hence the highly imbalanced workload caused by Vertex~0 in Worker~0 is now evened out. The request-respond paradigm reduces the running time of {\em S-V} on the USA road network from 261.9 seconds to 137.7 seconds.

Finally, we remark that our experiments were run in a cluster without any resource contention, and our optimization techniques are expected to improve the overall performance of Pregel algorithms more significantly if they were run in a public data center, where the network bandwidth is lower and reducing communication overhead becomes more important.

The rest of the paper is organized as follows. We review existing parallel graph computing systems, and highlight the differences of our work from theirs,  in Section~\ref{sec:related}. In Section~\ref{sec:alg}, we describe some Pregel algorithms for problems that are common in social network analysis and web analysis. In Section~\ref{sec:chmsg}, we introduce the basic communication framework. We present the mirroring technique and the request-respond functionality in Sections~\ref{sec:mirror} and~\ref{sec:req}. Finally, we report the experimental results in Section~\ref{sec:results} and conclude the paper in Section~\ref{sec:conclude}.

\section{Background and Related Work}\label{sec:related}

\begin{figure}[!t]
    \centering
    \includegraphics[width=0.75\columnwidth]{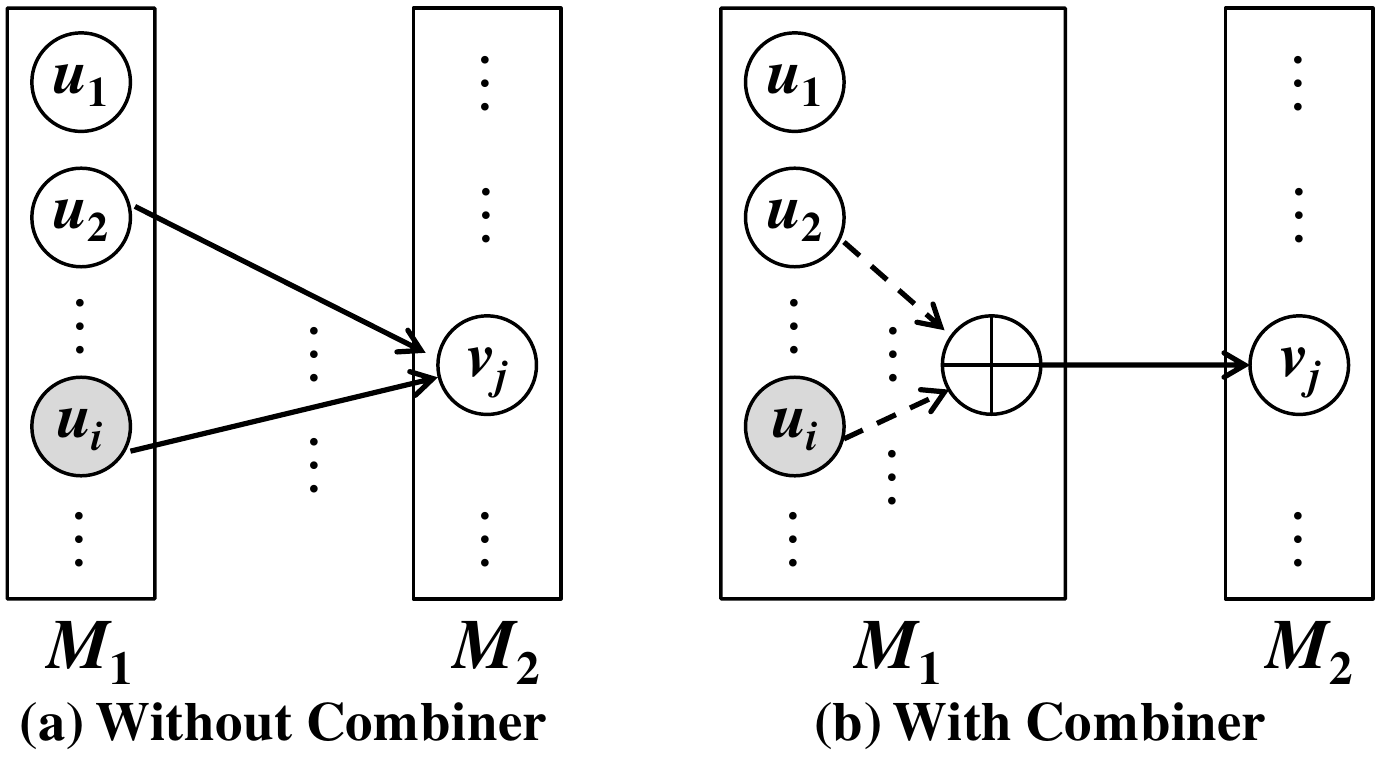}
    \vspace{-2mm}
    \caption{Illustration of combiner}\label{combiner}
    \vspace{-2mm}
\end{figure}

We first review Pregel's framework, and then discuss other related distributed graph computing systems.

\subsection{Pregel} \label{ssec:pregel}
Pregel~\cite{pregel} is designed based on the bulk synchronous parallel (BSP) model. It distributes vertices to different machines in a cluster, where each vertex $v$ is associated with its adjacency list (i.e., the set of $v$'s neighbors). A program in Pregel implements a user-defined {\em compute}() function and proceeds in iterations (called {\em supersteps}). In each superstep, the program calls {\em compute}() for each active vertex. The {\em compute}() function performs the user-specified task for a vertex $v$, such as processing $v$'s incoming messages (sent in the previous superstep), sending messages to other vertices (to be received in the next superstep), and making $v$ vote to halt. A halted vertex is reactivated if it receives a message in a subsequent superstep. The program terminates when all vertices vote to halt and there is no pending message for the next superstep.

Pregel numbers the supersteps so that a user may use the current superstep number when implementing the algorithm logic in the {\em compute}() function.  As a result, a Pregel algorithm can perform different operations in different supersteps by branching on the current superstep number.

\vspace{2mm}

\noindent{\bf Message Combiner.} Pregel allows users to implement a {\em combine}() function, which specifies how to combine messages that are sent from a machine $M_i$ to the same vertex $v$ in a machine $M_j$. These messages are combined into a single message, which is then sent from $M_i$ to $v$ in $M_j$. However, combiner is applied only when commutative and associative operations are to be applied to the messages. For example, in the PageRank computation, the messages sent to a vertex $v$ are to be summed up to compute $v$'s PageRank value; in this case, we can combine all messages sent from a machine $M_i$ to the same target vertex in a machine $M_j$ into a single message that equals their sum. Figure~\ref{combiner} illustrates the idea of combiner, where the messages sent by vertices in machine $M_1$ to the same target vertex $v_j$ in machine $M_2$ are combined into their sum before sending.

\vspace{2mm}

\noindent{\bf Aggregator.} Pregel also supports aggregator, which is useful for global communication. Each vertex can provide a value to an aggregator in {\em compute}() in a superstep. The system aggregates those values and makes the aggregated result available to all vertices in the next superstep.

\subsection{Pregel-Like Systems in JAVA}    \label{ssec:pregelLike}

Since Google's Pregel is proprietary, many open-source Pregel counterparts are developed. Most of these systems are implemented in JAVA, e.g., Giraph~\cite{giraph} and GPS~\cite{gps}. They read the graph data from Hadoop's DFS (HDFS) and write the results to HDFS. However, since object deletion is handled by JAVA's Garbage Collector (GC), if a machine maintains a huge amount of vertex/edge objects in main memory, GC needs to track a lot of objects and the overhead can severely degrade the system performance. To decrease the number of objects being maintained, JAVA-based systems maintain vertices in main memory in their binary representation. For example, Giraph organizes vertices as main memory pages, where each page is simply a byte array object that holds the binary representation of many vertices. As a result, a vertex needs to be deserialized from the page holding it before calling {\em compute}(); and after {\em compute}() completes, the updated vertex needs to be serialized back to its page. The serialization cost can be high, especially if the adjacency list is long. To avoid unnecessary serialization cost, a Pregel-like system should be implemented in a language such as C/C++, where programmers (who are system developers, not end users) manage main memory objects themselves. We implemented our Pregel+ system in C/C++.

GPS~\cite{gps} supports an optimization called large adjacency list partitioning (LALP) to handle high-degree vertices, whose idea is similar to vertex mirroring. However, GPS does not explore the performance tradeoff between vertex mirroring and message combining. Instead, it is claimed in~\cite{gps} that very small performance difference can be observed whether combiner is used or not, and thus, GPS simply does not perform sender-side message combining. Our experiments in Section~\ref{sec:results} show that sender-side message combining significantly reduces the overall running time of Pregel algorithms, and therefore, both vertex mirroring and message combining should be used to achieve better performance. As we shall see in Section~\ref{sec:mirror}, vertex mirroring and message combining are two conflicting message reduction techniques, and a theoretical analysis on their performance tradeoff is needed in order to devise a cost model for automatically choosing vertices for mirroring.

\subsection{GraphLab and PowerGraph}\label{ssec:graphlab}
GraphLab~\cite{graphlab} is another parallel graph computing system that follows a design different from Pregel. GraphLab supports asynchronous execution, and adopts a data pulling programming paradigm. Specifically, each vertex actively pulls data from its neighbors, rather than passively receives messages sent/pushed by its neighbors. This feature is somewhat similar to our request-respond paradigm, but in GraphLab, the requests can only be sent to the neighbors. As a result, GraphLab cannot support parallel graph algorithms where a vertex needs to communicate with a non-neighbor. Such algorithms are, however, quite popular in Pregel as they make use of the pointer jumping (or doubling) technique of PRAM algorithms to bound the number of iterations by $O(\log|V|)$. Examples include the {\em S-V} algorithm for computing CCs~\cite{ppa_vldb} and Pregel algorithm for computing minimum spanning forest~\cite{scc_pregel}. These algorithms can benefit significantly from our request-respond technique. Recently, several studies~\cite{sys_vldb,LuCYW15pvldb} reported that GraphLab's asynchronous execution is generally slower than its synchronous mode (that simulates Pregel's model) due to the high locking/unlocking overhead. Thus, we mainly focus on Pregel's computing model in this paper.

GraphLab also builds mirrors for vertices, which are called ghosts. However, GraphLab creates mirrors for every vertex regardless of its degree, which leads to excessive space consumption. A more recent version of GraphLab, called PowerGraph~\cite{powergraph}, partitions the graph by edges rather than by vertices. Edge partitioning mitigates the problem of imbalanced workload as the edges of a high-degree vertex are handled by multiple workers. Accordingly, a new edge-centric Gather-Apply-Scatter (GAS) computing model is used instead of the traditional vertex-centric computing model.

\section{Pregel Algorithms}  \label{sec:alg}
In this section, we describe some Pregel algorithms for problems that are common in social network analysis and web analysis, which will be used for illustrating important concepts and for performance evaluation.

We consider fundamental problems such as (1)computing connected components (or bi-connected components), which is a common preprocessing step for social network analysis~\cite{mislove2007measurement,niu2012evolution}; (2)computing minimum spanning tree (or forest), which is useful in mining social relationships~\cite{niu2012evolution}; and (3)computing PageRank, which is widely used in ranking web pages~\cite{page1999pagerank,jeh2003scaling} and spam detection\cite{gyongyi2004combating}.

For ease of presentation, we first define the graph notations used in the paper. Given an undirect graph $G=(V, E)$, we denote the neighbors of a vertex $v\in V$ by $\Gamma(v)$, and the degree of $v$ by $d(v)=|\Gamma(v)|$; if $G$ is directed, we denote the in-neighbors (out-neighbors) of a vertex $v$ by $\Gamma_{in}(v)$ ($\Gamma_{out}(v)$), and the in-degree (out-degree) of $v$ by $d_{in}(v)=|\Gamma_{in}(v)|$ ($d_{out}(v)=|\Gamma_{out}(v)|$). Each vertex $v\in V$ has a unique integer ID, denoted by $id(v)$. The diameter of $G$ is denoted by $\delta$.

\subsection{Attribute Broadcast}    \label{ssec:broadcast}

We first introduce a Pregel algorithm for {\em attribute broadcast}. Given a directed graph $G$, where each vertex $v$ is associated with an attribute $a(v)$ and an adjacency list that contains the set of $v$'s out-neighbors $\Gamma_{out}(v)$, \emph{attribute broadcast} constructs a new adjacency list for each vertex $v$ in $G$, which is defined as $\widehat{\Gamma}_{out}(v)=\{\langle u, a(u)\rangle | u \in \Gamma_{out}(v)\}$.

Put simply, {\em attribute broadcast} associates each neighbor $u$ in the adjacency list of a vertex $v$ with $u$'s attribute $a(u)$. {\em Attribute broadcast} is very useful in distributed graph computation, and it is a frequently performed key operation in many Pregel algorithms. For example, the Pregel algorithm for computing bi-connected components~\cite{ppa_vldb} requires to relabel the ID of each vertex $u$ by its preorder number in the spanning tree, denoted by $pre(u)$. {\em Attribute broadcast} is used in this case, where $a(u)$ refers to $pre(u)$.

The Pregel algorithm for {\em attribute broadcast} consists of 3 supersteps: in superstep~1, each vertex $v$ sends a message $\langle v\rangle$ to each neighbor $u \in\Gamma_{out}(v)$ to request for $a(u)$; then in superstep~2, each vertex $u$ obtains the requesters $v$ from the incoming messages, and sends the response message $\langle u, a(u)\rangle$ to each requester $v$; finally in superstep~3, each vertex $v$ collects the incoming messages to construct $\widehat{\Gamma}_{out}(v)$.

\subsection{PageRank}   \label{ssec:pagerank}
Next we present a Pregel algorithm for PageRank computation. Given a directed web graph $G=(V, E)$, where each vertex (page) $v$ links to a list of pages $\Gamma_{out}(v)$, the problem is to compute the PageRank, $pr(v)$, of each vertex $v \in V$.

Pregel's PageRank algorithm~\cite{pregel} works as follows. In superstep~1, each vertex $v$ initializes $pr(v)=1/|V|$ and distributes the value $\langle pr(v)/d_{out}(v)\rangle$ to each out-neighbor of $v$. In superstep~$i$ ($i > 1$), each vertex $v$ sums up the received values from its in-neighbors, denoted by $sum$, and computes $pr(v)=0.15/|V|+0.85\times sum$. It then distributes $\langle pr(v)/d_{out}(v)\rangle$ to each of its out-neighbors.

\subsection{Hash-Min}   \label{ssec:hashmin}
We next present a Pregel algorithm for computing connected components (CCs) in an undirected graph. We adopt the {\em Hash-Min} algorithm~\cite{cc2013icde, ppa_vldb}. Given a CC $C$, let us denote the set of vertices of $C$ by $V(C)$, and define the ID of $C$ to be $id(C)=\min\{id(v): v \in V(C)\}$. We further define the \emph{color} of a vertex $v$ as $cc(v)=id(C)$, where $v \in V(C)$. {\em Hash-Min} computes $cc(v)$ for each vertex $v \in V$, and the idea is to broadcast the smallest vertex ID seen so far by each vertex $v$, denoted by $min(v)$. When the algorithm terminates, $min(v)=cc(v)$ for each vertex $v \in V$.

We now describe the {\em Hash-Min} algorithm in Pregel framework. In superstep~1, each vertex $v$ sets $min(v)$ to be $id(v)$, broadcasts $min(v)$ to all its neighbors, and votes to halt. In each subsequent superstep, each vertex $v$ receives messages from its neighbors; let $min^*$ be the smallest ID received, if $min^* < min(v)$, $v$  sets $min(v) = min^*$ and broadcasts $min^*$ to its neighbors. All vertices vote to halt at the end of a superstep. When the process converges, all vertices have voted to halt and for each vertex $v$, we have $min(v)=cc(v)$.

\subsection{The S-V Algorithm}   \label{ssec:sv}
The {\em Hash-Min} algorithm described in Section~\ref{ssec:hashmin} requires $O(\delta)$ supersteps~\cite{ppa_vldb}, which can be slow for computing CCs in large-diameter graphs. Another Pregel algorithm proposed in~\cite{ppa_vldb} computes CCs in $O(\log |V|)$ supersteps, by adapting Shiloach-Vishkin's ({\em S-V}) algorithm for the PRAM model~\cite{ShiloachV82jal}. We use this algorithm to demonstrate how algorithm logic generates a bottleneck vertex $v$ even if $d(v)$ is small.

\begin{figure}[!t]
    \centering
    \includegraphics[width=0.77\columnwidth]{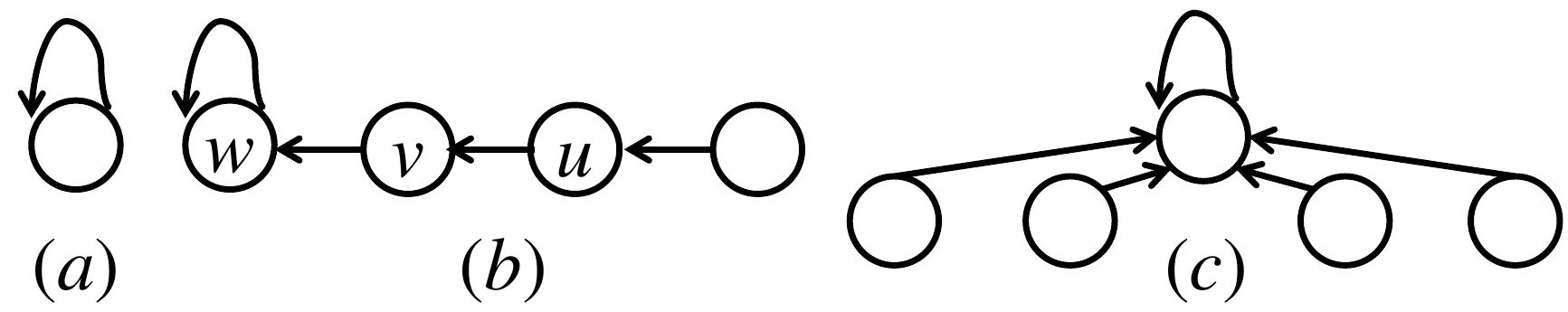}
        \vspace{-2mm}
    \caption{Forest structure of the {\em S-V} algorithm}\label{forest}
            \vspace{-3mm}
\end{figure}

In the {\em S-V} algorithm, each vertex $u$ maintains a pointer $D[u]$, which is initialized as $u$, forming a self loop as shown Figure~\ref{forest}(a). During the computation, vertices are organized into a forest such that all vertices in a tree belong to the same CC. The tree definition is relaxed a bit here to allow the tree root $w$ to have a self-loop, i.e., $D[w]=w$  (see Figures~\ref{forest}(b) and \ref{forest}(c)); while $D[v]$ of any other vertex $v$ in the tree points to $v$'s parent.

\begin{figure}[!t]
    \centering
    \includegraphics[width=\columnwidth]{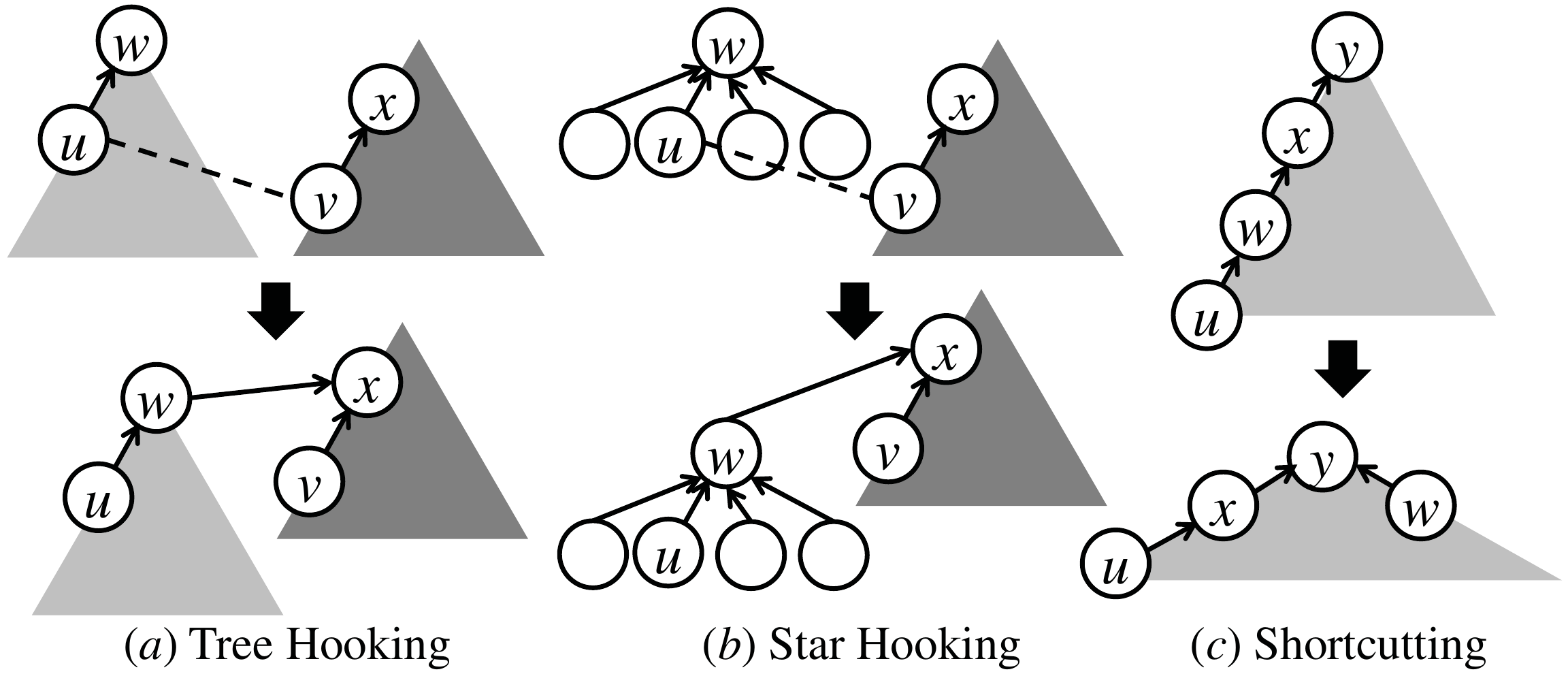}
        \vspace{-4mm}
    \caption{Key operations of the {\em S-V} algorithm}\label{svop}
\end{figure}

The {\em S-V} algorithm proceeds in rounds, and in each round, the pointers are updated in three steps (illustrated in Figure~\ref{svop}): (1){\em tree hooking}: for each edge $(u, v)$, if $u$'s parent $w=D[u]$ is a tree root, hook $w$ as a child of $v$'s parent $D[v]$, i.e., set $D[D[u]]=D[v]$; (2){\em star hooking}: for each edge $(u, v)$, if $u$ is in a star (see Figure~\ref{forest}(c) for an example of star), hook the star to $v$'s tree as in Step~(1), i.e., set $D[D[u]]=D[v]$; (3){\em shortcutting}: for each vertex $v$, move vertex $v$ and its descendants closer to the tree root, by hooking $v$ to the parent of $v$'s parent, i.e., setting $D[v]=D[D[v]]$. The above three steps execute in rounds, and the algorithm ends when every vertex is in a star.

Due to the shortcutting operation, the {\em S-V} algorithm creates flattened trees (e.g., stars) with large fan-out towards the end of the execution. As a result, a vertex $w$ may have many children $u$ (i.e., $D[u]=w$), and each of these children $u$ requests $w$ for the value of $D[w]$. This renders $w$ a bottleneck vertex. In particular, in the last round of the {\em S-V} algorithm, all vertices $v$ in a CC $C$ have $D[v]=id(C)$, and they all send requests to the vertex $w=id(C)$ for $D[w]$. In the basic Pregel framework, $w$ receives $|V(C)|$ requests and sends $|V(C)|$ responses, which leads to skewed workload when $|V(C)|$ is large.

\subsection{Minimum Spanning Forest}\label{ssec:msf}
The Pregel algorithm proposed by~\cite{scc_pregel} for minimum spanning forest (MSF) computation is another example that shows how algorithm logic can generate bottleneck vertices. This algorithm proceeds in iterations, where each iteration consists of three steps, which we describe below.

\begin{figure}[!t]
    \centering
    \includegraphics[width=0.6\columnwidth]{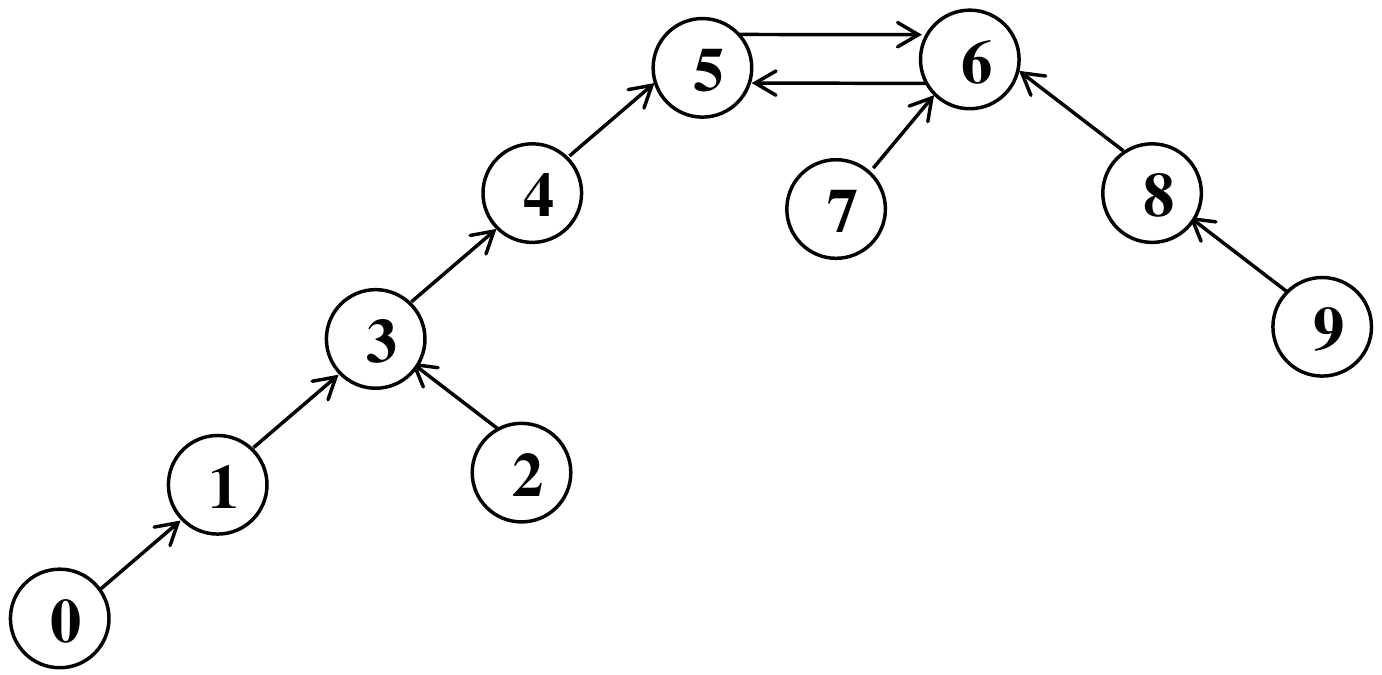}
        \vspace{-3mm}
    \caption{Conjoined Tree}\label{conjoin}
            \vspace{-3mm}
\end{figure}

In Step~(1), each vertex $v$ picks an edge with the minimum weight. The vertices and their picked edges form disjoint subgraphs, each of which is a conjoined-tree: two trees with their roots joined by a cycle. Figure~\ref{conjoin} illustrates the concept of a conjoined-tree, where the edges are those picked in Step~(1). The vertex with the smaller ID in the cycle of a conjoined-tree is called the supervertex of the tree (e.g., vertex~5 is the supervertex in Figure~\ref{conjoin}), and the other vertices are called the subvertices.


In Step~(2), each vertex finds the supervertex of the conjoined-tree it belongs to, which is accomplished by pointer jumping. Specifically, each vertex $v$ maintains a pointer $D[v]$; suppose that $v$ picks edge $(v, u)$ in Step~(1), then the value of $D[v]$ is initialized as $u$. Each vertex $v$ then sends request to $w=D[v]$ for $D[w]$. Initially, the actual supervertex $s$ (e.g., vertex~5 in Figure~\ref{conjoin}) and its neighbor $s'$ in the cycle (e.g. vertex~6 in Figure~\ref{conjoin}) see that they have sent each other messages and detect that they are in the cycle. Vertex $s$ then sets itself as the supervertex (i.e., sets $D[s]=s$) due to $s<s'$, before responding $D[s]=s$ to the requesters (while $D[s']=s$ remains for $s'$ since $s'>s$). For any other vertex $v$, it receives response $D[w]$ from $w=D[v]$ and updates $D[v]$ to be $D[w]$. This process is repeated until convergence, upon when $D[v]$ records the supervertex $s$ for all vertices $v$.

In Step~(3), each vertex $v$ sends request to each neighbor $u\in\Gamma(v)$ for its supervertex $D[u]$, and removes edge $(v, u)$ if $D[v]=D[u]$ (i.e., $v$ and $u$ are in the same conjoined-tree); $v$ then sends the remaining edges (to vertices in other conjoined-trees) to the supervertex $D[v]$. After this step, all subvertices are condensed into their supervertex, which constructs an adjacency list of edges to the other supervertices from those edges sent by its subvertices.

We consider an improved version of the above algorithm that applies the Storing-Edges-At-Subvertices (SEAS) optimization of~\cite{scc_pregel}. Specifically, instead of having the supervertex merge and store all cross-tree edges, the SEAS optimization stores the edges of a supervertex in a distributed fashion among all of its subvertices. As a result, if a supervertex $s$ is merged into another supervertex, it has to notify its subvertices of the new supervertex they belong to. This is accomplished by having each vertex $v$ send request to its supervertex $D[v]=s$ for $D[s]$. Since smaller conjoined-trees are merged into larger ones, a supervertex $s$ may have many subvertices $v$ towards the end of the execution, and they all request for $D[s]$ from $s$, rendering $s$ a bottleneck vertex.

\section{Basic Communication Framework}   \label{sec:chmsg}
When considering on which system we should implement our message reduction techniques, we decided to implement a new \emph{open-source Pregel system in C/C++}, called \textbf{Pregel+}, to avoid the pitfalls of a JAVA-based system described in Section~\ref{ssec:pregelLike}. Other reasons for a new Pregel implementation include: (1)GPS does not perform sender-side message combining, while our work studies effective message reduction techniques in a system that adheres to Pregel's framework, where message combining is supported; (2)Giraph has been shown to have inferior performance in recent performance evaluation of graph-parallel systems~\cite{sys_sigmod1,sys_bigdata,guo2013well,sys_vldb,sys_sigmod2} and also in our experiments; (3)other existing graph computing systems are also not suitable as described in Sections~\ref{ssec:graphlab} and~\ref{ssec:othersystems}.

We first introduce the basic communication framework of Pregel+. Our two new message reduction techniques to be introduced in Sections~\ref{sec:mirror} and~\ref{sec:req} further extend the basic communication framework.

We use the term ``worker'' to represent a computing unit, which can be a machine or a thread/process in a machine. For ease of discussion, we assume that each machine runs only one worker but the concepts can be straightforwardly generalized.

In Pregel+, each worker is simply an MPI (Message Passing Interface) process and communications among different processes are implemented using MPI's communication primitives. Each worker maintains a \emph{message channel}, $Ch_{msg}$, for exchanging the vertex-to-vertex messages. In the {\em compute}() function, if a vertex sends a message $msg$ to a target vertex $v_{tgt}$, the message is simply added to $Ch_{msg}$. Like in Google's Pregel, messages in $Ch_{msg}$ are sent to the target workers in batches before the next superstep begins. Note that if a message $msg$ is sent from worker $M_i$ to vertex $v_{tgt}$ in worker $M_j$, the ID of the target $v_{tgt}$ should be sent along with $msg$, so that when $M_j$ receives $msg$, it knows which vertex $msg$ should be directed to.

The operation of the message channel $Ch_{msg}$ is directly related to the communication cost and hence affects the overall performance of the system. We tested different ways of implementing $Ch_{msg}$, and the most efficient one is presented in Figure~\ref{worker}. We assume that a worker maintains $N$ vertices, $\{v_1, v_2,$ $\ldots, v_N\}$. The message channel $Ch_{msg}$ associates each vertex $v_i$ with an incoming message buffer $I_i$. When an incoming message $msg_1$ directed to vertex $v_i$ arrives, $Ch_{msg}$ looks up a hash table $T_{in}$ for the incoming message buffer $I_i$ using $v_i$'s ID. It then appends $msg_1$ to the end of $I_i$. The lookup table $T_{in}$ is static unless graph mutation occurs, in which case updates to $T_{in}$ may be required. Once all incoming messages are processed, {\em compute}() is called for each active vertex $v_i$ with the messages in $I_i$ as the input.

\begin{figure}[!t]
    \centering
    \includegraphics[width=0.9\columnwidth]{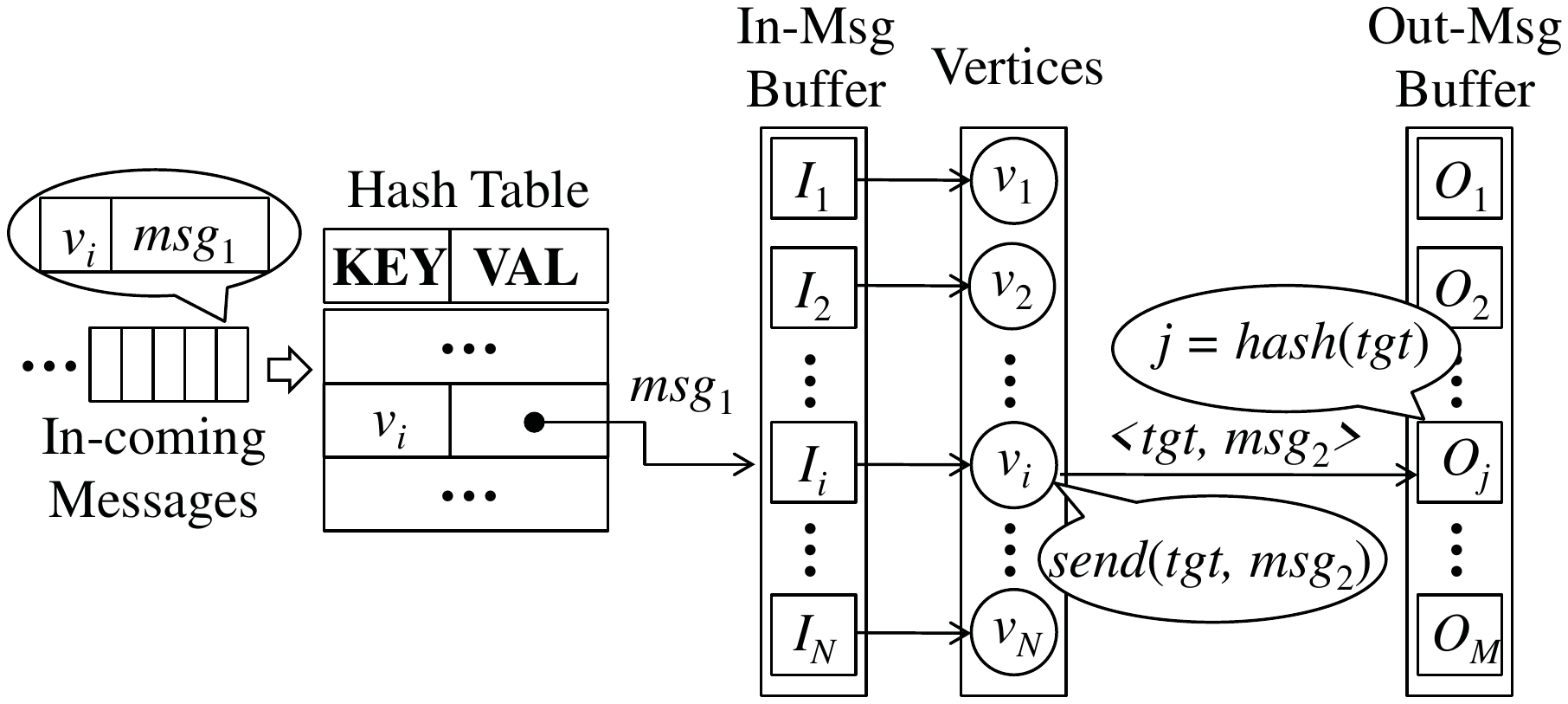}
    \vspace{-1mm}
        \caption{Illustration of Message Channel, $Ch_{msg}$}\label{worker}
    \vspace{-1mm}
\end{figure}

A worker also maintains $M$ \emph{outgoing message buffers} (where $M$ is the number of workers), one for each worker $M_j$ in the cluster, denoted by $O_j$. In {\em compute}(), a vertex $v_i$ may send a message $msg_2$ to another vertex with ID $tgt$. Let $hash(.)$ be the hash function that computes the worker ID of a vertex from its vertex ID, then the target vertex is in worker $M_{hash(tgt)}$. Thus, $msg_2$ (along with $tgt$) is appended to the end of the buffer $O_{hash(tgt)}$. Messages in each buffer $O_j$ are sent to worker $M_j$ in batch. If a combiner is used, the messages in a buffer $O_j$ are first grouped (sorted) by target vertex IDs, and messages in each group are combined into one message using the combiner logic before sending.


\if 0

\subsection{Comparison with States-of-the-Arts}\label{ssec:why}
{\bf Why implementing the message reduction techniques on Pregel+?\ }\ While the basic vertex-to-vertex messages are exchanged through $Ch_{msg}$ (with combiner properly applied), messages exchanged by our two message reduction techniques are though two other message channels, $Ch_{mir}$ and $Ch_{req}$, which will be described in the next two sections. Therefore, if $Ch_{msg}$ is implemented to be slow, the performance improvement of $Ch_{mir}$ and $Ch_{req}$ would be exaggerated. To fairly evaluate the benefits of our message reduction techniques, they should be implemented on top of a Pregel-like system with a reasonably fast $Ch_{msg}$.

Many studies have compared the performance of existing graph-parallel systems recently~\cite{gSysComp,sys_vldb,sys_sigmod1,sys_sigmod2,guo2013well,sys_bigdata}, including the state-of-the-arts like Pregel+~\cite{ppa_vldb}, Giraph~\cite{giraph}, GraphLab~\cite{graphlab} and GPS~\cite{gps}. \cite{gSysComp} shows that Pregel+ (where messages are passed only through $Ch_{msg}$) is consistently more efficient than Giraph and GraphLab, which implies that Pregel+ is a reasonable choice for implementing and evaluating our message reduction techniques. Neither Pregel+ nor GPS can beat each other in all algorithms on all datasets, but we choose Pregel+ instead of GPS because of the following reason: to avoid the complicated interaction between vertex mirroring and message combining, GPS simply does not perform sender-side message combining, and thus, it is not representative of a typical Pregel-like system.

\begin{figure}[!t]
    \centering
    \includegraphics[width=\columnwidth]{btc_gps}
    \vspace{-1mm}
        \caption{Message \# of Each Worker: Hash-Min on BTC in GPS}\label{btc_gps}
    \vspace{-3mm}
\end{figure}

We ran Hash-Min on {\em BTC} in GPS, the results of which are presented in Figure~\ref{btc_gps}, where the taller blue bars (the short red bars) indicate the total number of messages sent by each worker when GPS's LALP optimization is disabled (enabled by mirroring all vertices with degree at least 100)\footnote{We extend GPS to count the number of messages, and the updated codes are accessible from \url{https://github.com/pregel-like/gps}}. Compared with the results of Pregel+ shown in Figure~\ref{msgNum_btc}, GPS sends many more messages per worker as message combining is not used; moreover, the effect of GPS's LALP is not as significant as Pregel+'s optimization that uses both vertex mirroring and message combining, which demonstrates the necessity of applying both techniques.

However, it is amazing that GPS is not much worse in most cases, and is sometimes even faster than Pregel+, even though more messages are exchanged. We studied the codes of GPS to explore the reason, which we explain below. GPS require that the vertex IDs should be integers that are contiguous starting from $0, \cdots, |V|$,\footnote{\url{https://sites.google.com/site/gpsdocumentation/home/input-graph-format}} while other systems allow the vertex IDs to be of any user-specified type as long as a hash function is provided (for calculating the ID of the worker that each vertex resides in). As a result of the dense ID representation, each worker in GPS simply maintains the incoming message buffers of the vertices by an array, and when a worker receives a message targeted at vertex $tgt$, it is put into $tgt$'s incoming message buffer (i.e., $I_{tgt}$) whose position in the array can be directly computed from $tgt$. On the other hand, systems like Pregel+ and Giraph need to look up $I_{tgt}$ from a hash table using key $tgt$, and the cost is required for each message exchanged.

We remark that it is necessary to allow the vertex IDs to take arbitrary type, rather than to hard-code them as contiguous integers. For example, the Pregel algorithm of~\cite{ppa_vldb} for computing bi-connected components constructs an auxiliary graph from the input graph, and each vertex of the auxiliary graph corresponds to an edge $(u, v)$ of the input graph. In this case, using integer pair as vertex ID saves the effort of relabeling the vertices of the auxiliary graph using integer IDs. Although we can easily implement the same message passing mechanism of GPS in Pregel+ to enjoy the better performance when vertex IDs are contiguous integers, it is out of the scope of this paper which studies message reduction techniques for a general Pregel program.

\fi

\section{The Mirroring Technique}   \label{sec:mirror}
The mirroring technique is designed to eliminate bottleneck vertices caused by high vertex degree.

Given a high-degree vertex $v$, we construct a mirror for $v$ in any worker in which some of $v$'s neighbors reside. When $v$ needs to send a message, e.g., the value of its attribute, $a(v)$, to its neighbors, $v$ sends $a(v)$ to its mirrors. Then, each mirror forwards $a(v)$ to the neighbors of $v$ that reside in the same local worker as the mirror, without any message passing.

Figure~\ref{mirror} illustrates the idea of mirroring. Assume that $u_i$ is a high-degree vertex residing in worker machine $M_1$, and $u_i$ has neighbors $\{v_1, v_2, \dots, v_j\}$ residing in machine $M_2$ and neighbors $\{w_1, w_2, \dots, w_k\}$ residing in machine $M_3$. Suppose that $u_i$ needs to send a message $a(u_i)$ to the $j$ neighbors in $M_2$ and $k$ neighbors in $M_3$. Figure~\ref{mirror}(a) shows how $u_i$ sends $a(u_i)$ to its neighbors in $M_2$ and $M_3$ using Pregel's vertex-to-vertex message passing. In total, $(j+k)$ messages are sent, one for each neighbor. To apply mirroring, we construct a mirror for $u_i$ in $M_2$ and $M_3$, as shown by the two squares (with label $u_i$) in Figure~\ref{mirror}(b). In this way, as illustrated in Figure~\ref{mirror}(b),  $u_i$ only needs to send $a(u_i)$ to the two mirrors in $M_2$ and $M_3$. Then, each mirror forwards $a(u_i)$ to $u_i$'s neighbors locally in $M_2$ and $M_3$ without any network communication. In total, only two messages are sent through the network, which not only tremendously reduces the communication cost, but also eliminates the imbalanced communication load caused by $u_i$.

\begin{figure}[!t]
    \centering
    \includegraphics[width=0.9\columnwidth]{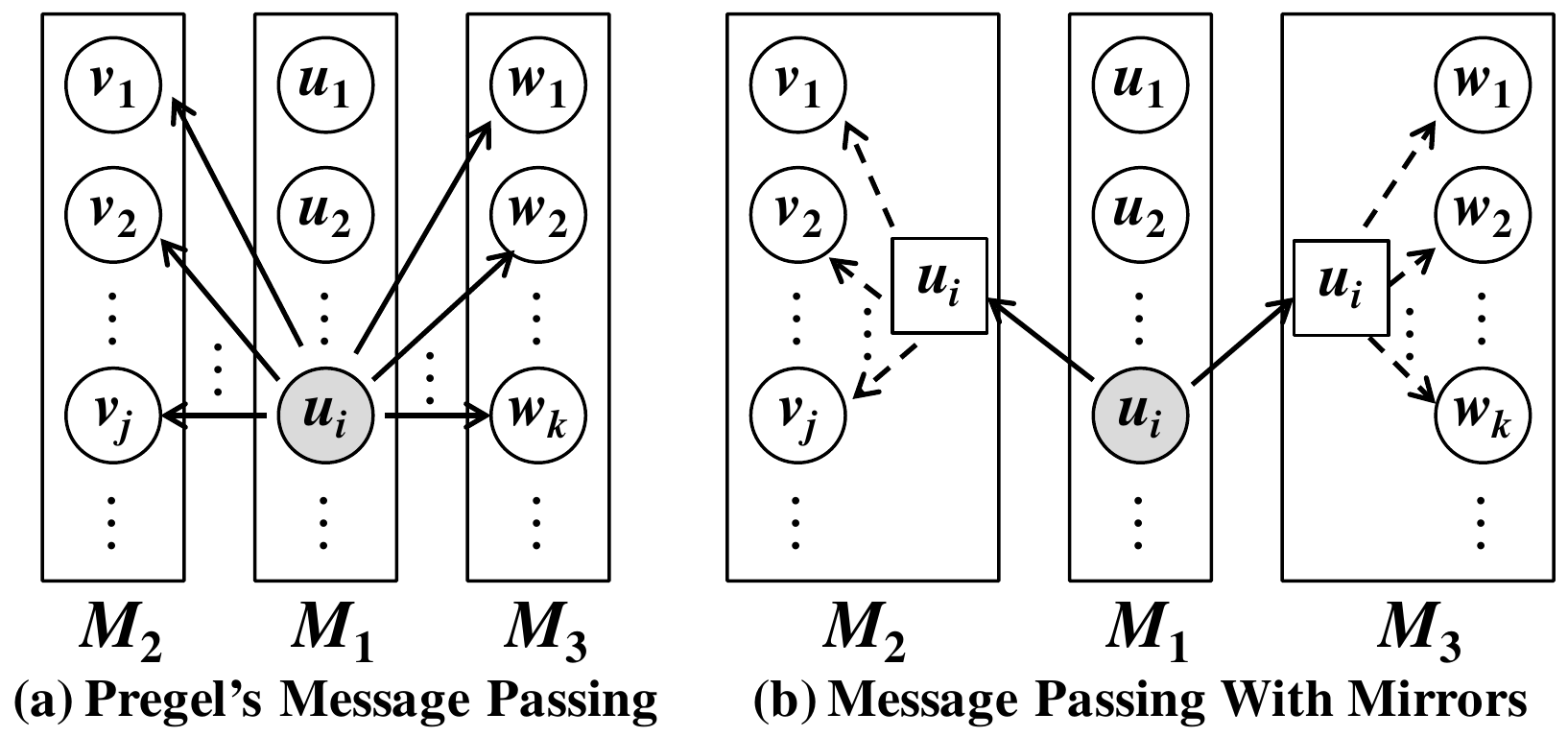}
\vspace{-2mm}
    \caption{Illustration of Mirroring}\label{mirror}
\end{figure}

We formalize the effectiveness of mirroring for message reduction by the following theorem.


\begin{theorem}\label{th:mirrorBound}
Let $d(v)$ be the degree of a vertex $v$ and $M$ be the number of machines. Suppose that $v$ is to deliver a message $a(v)$ to all its neighbors in one superstep. If mirroring is applied on $v$, then the total number of messages sent by $v$ in order to deliver $a(v)$ to all its neighbors is bounded by $\min\{M,d(v)\}$.
\end{theorem}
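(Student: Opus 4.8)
The plan is to count, under the mirroring scheme, exactly how many network messages leave $v$ when it broadcasts $a(v)$, and to show this count is simultaneously at most $M$ and at most $d(v)$. First I would fix notation: for each neighbor $u \in \Gamma(v)$, let $w(u)$ denote the worker in which $u$ resides, and let $W_v = \{\, w(u) : u \in \Gamma(v),\ w(u) \ne w(v) \,\}$ be the set of workers, other than $v$'s own worker, that host at least one neighbor of $v$. By the mirroring construction described just before the theorem, $v$ has exactly one mirror in each worker of $W_v$ (neighbors residing in $v$'s own worker need no mirror, since $v$ reaches them locally), and the only messages $v$ places on the network are the copies of $a(v)$ sent to these mirrors. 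Hence the number of messages sent by $v$ equals $|W_v|$.

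Then I would bound $|W_v|$ in two ways. On one hand, $W_v$ is a subset of the set of all workers, of which there are $M$, so $|W_v| \le M$ (indeed $\le M-1$, but $M$ suffices for the stated bound). On the other hand, the map $u \mapsto w(u)$, restricted to those neighbors $u$ with $w(u)\ne w(v)$, is a surjection onto $W_v$, so $|W_v|$ is at most the number of such neighbors, which is at most $|\Gamma(v)| = d(v)$. Combining the two inequalities gives $|W_v| \le \min\{M, d(v)\}$, which is exactly the claimed bound.

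The argument is essentially a pigeonhole count, so I do not expect a genuine technical obstacle; the only point requiring care is the bookkeeping about whether $v$'s own worker (and the local forwarding there) is counted, and confirming that the mirroring construction creates precisely one mirror per worker hosting a neighbor of $v$ — once that is pinned down from the description preceding the theorem, the two inequalities are immediate. I might also add the remark that the bound is tight: it is attained whenever $v$'s neighbors are distributed so that every worker (or, when $d(v) < M$, some $d(v)$ distinct workers) hosts at least one neighbor of $v$.
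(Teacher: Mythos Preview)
Your argument is correct and follows the same approach as the paper's proof, which simply observes that $v$ sends one message to each of its mirrors and that there are at most $\min\{M, d(v)\}$ of them. Your version is more explicit about the bookkeeping (the set $W_v$, the surjection $u \mapsto w(u)$, and the tightness remark), but the underlying idea is identical.
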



\begin{proof}
The proof follows directly from the fact that $v$ only needs to send one message $a(v)$ to each of its mirrors in other machines and there are at most $\min\{M,d(v)\}$ mirrors of $v$.
\end{proof}


\noindent{\bf Mirroring Threshold.} The mirroring technique is transparent to programmers. But we can allow users to specify a mirroring threshold $\tau$ such that mirroring is applied to a vertex $v$ only if $d(v) \ge \tau$ (we will see shortly that $\tau$ can be automatically set by a cost model following the result of Theorem~\ref{th:mirror}).  If a vertex has degree less than $\tau$, it sends messages through the normal message channel $Ch_{msg}$ as usual. Otherwise, the vertex only sends messages to its mirrors, and we call this message channel as the \emph{mirroring message channel}, or $Ch_{mir}$ in short. In a nutshell, a message is sent either through $Ch_{msg}$ or $Ch_{mir}$, depending on the degree of the sending vertex.

Figure~\ref{mirror_combiner} illustrates the concepts of $Ch_{msg}$ and $Ch_{mir}$, where we only consider the message passing between two machines $M_1$ and $M_2$. The adjacency lists of vertices $u_1$, $u_2$, $u_3$ and $u_4$ in $M_1$ are shown in Figure~\ref{mirror_combiner}(a), and we consider how they send messages to their common neighbor $v_2$ residing in machine $M_2$. Assume that $\tau=3$, then as Figure~\ref{mirror_combiner}(b) shows, $u_1$, $u_2$ and $u_3$ send their messages, $a(u_1)$, $a(u_2)$ and $a(u_3)$, through $Ch_{msg}$, while $u_4$ sends its message $a(u_4)$ through $Ch_{mir}$.

\begin{figure}[!t]
    \centering
    \includegraphics[width=0.83\columnwidth]{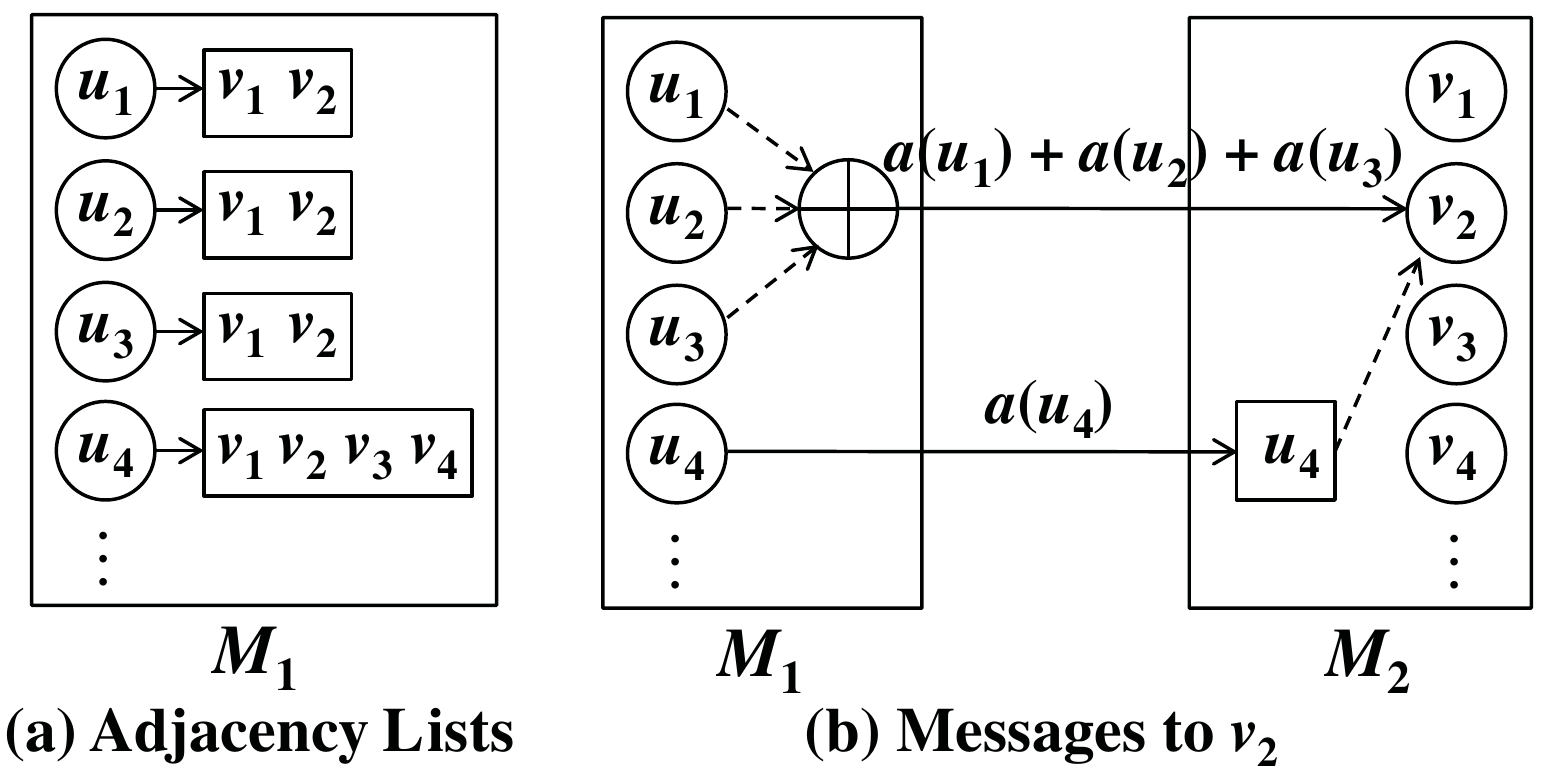}
    \vspace{-2mm}
    \caption{Mirroring v.s.\ Message Combining}\label{mirror_combiner}
    \vspace{-1mm}
\end{figure}

\vspace{2mm}

\noindent{\bf Mirroring v.s.\ Message Combining.} Now let us assume that the messages are to be applied with commutative and associative operations at the receivers' side, e.g., the message values are to be summed up as in PageRank computation. In this case, a combiner can be applied on the message channel $Ch_{msg}$. However, the \emph{receiver-centric} message combining is not applicable to the \emph{sender-centric} channel $Ch_{mir}$. For example, in Figure~\ref{mirror_combiner}(b), when $u_4$ in $M_1$ sends $a(u_4)$ to its mirror in $M_2$, $u_4$ does not need to know the receivers (i.e., $v_1$, $v_2$, $v_3$ and $v_4$); thus, its message to $v_2$ cannot be combined with those messages from $u_1$, $u_2$ and $u_3$ that are also to be sent to $v_2$. In fact, $u_4$ only holds a list of the machines that contain $u_4$'s neighbors, i.e. $\{M_2\}$ in this example, and $u_4$'s neighbors $v_1$, $v_2$, $v_3$ and $v_4$ that are local to $M_2$ are connected by $u_4$'s mirror in $M_2$.

It may appear that $u_4$'s message to its mirror is wasted, because if we combine $u_4$'s message with those messages from $u_1$, $u_2$ and $u_3$, then we do not need to send it through $Ch_{mir}$. However, we note that a high-degree vertex like $u_4$ often has many neighbors in another worker machine, e.g., $v_1$, $v_3$ and $v_4$ in addition to $v_2$ in this example, and the message is not wasted since the message is also forwarded to $v_3$ and $v_4$, which are not the neighbors of any other vertex in $M_1$. 

\vspace{2mm}

\noindent{\bf Choice of Mirroring Threshold.} The above discussion shows that there are cases where mirroring is useful, but it does not give any formal guideline as to when exactly mirroring should be applied. To this end, we conduct a theoretical analysis below on the interplay between mirroring and message combining. Our result shows that mirroring is effective even when message combiner is used.


\begin{theorem}\label{th:mirror}
Given a graph $G=(V, E)$ with $n=|V|$ vertices and $m=|E|$ edges, we assume that the vertex set is evenly partitioned among $M$ machines (e.g., by hashing as in Pregel) and each machine holds $n/M$ vertices. We further assume that the neighbors of a vertex in $G$ are randomly chosen among $V$, and the average degree $deg_{avg}=m/n$ is a constant. Then, mirroring should be applied to a vertex $v$ if $v$'s degree is at least $(M\cdot\exp\{deg_{avg}/M\})$.
\end{theorem}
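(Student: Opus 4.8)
The plan is to fix a vertex $v$ residing in some machine $M_i$ and compare the number of \emph{network} messages needed to deliver $a(v)$ to all of $\Gamma(v)$ under two strategies: (a) sending through the ordinary channel $Ch_{msg}$ with the combiner enabled (no mirroring), and (b) sending through $Ch_{mir}$ after mirroring $v$. Write $C_{no}$ for the number of $Ch_{msg}$ messages that mirroring $v$ would eliminate (equivalently, $v$'s net message contribution to $Ch_{msg}$, after combining), and $C_{mir}$ for the number of messages the mirrors of $v$ incur on $Ch_{mir}$. Mirroring reduces communication exactly when $C_{mir}<C_{no}$, so, since Theorem~\ref{th:mirror} only asserts a sufficient condition, it suffices to show $C_{mir}\le C_{no}$ whenever $d(v)\ge M\exp\{deg_{avg}/M\}$.

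The quantity $C_{mir}$ is the easy half: $v$ owns one mirror in each machine $M_j\neq M_i$ containing at least one of its neighbors, and sends exactly one copy of $a(v)$ to each such mirror (its neighbors inside $M_i$ are served locally), so by Theorem~\ref{th:mirrorBound} we have $C_{mir}\le\min\{M,d(v)\}\le M$. Since the threshold $M\exp\{deg_{avg}/M\}\ge M$, whenever $d(v)$ exceeds the threshold we in fact have $C_{mir}\le M$; this crude bound is all that will be needed.

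The quantity $C_{no}$ is where the interaction with the combiner enters, and this is the step I expect to be the crux. The key observation is that when $v$ sends $a(v)$ to a neighbor $u$ lying in machine $M_j$, the combiner merges this message with every other message sent from $M_i$ to $u$; hence $v$'s send to $u$ adds a genuinely new message on the wire only when $u$ has no neighbor in $M_i$ other than $v$ itself. Under the hypotheses, the $d(u)$ neighbors of $u$ are chosen uniformly at random from $V$, and since $M_i$ holds $n/M$ of the $n$ vertices each neighbor lies in $M_i$ with probability $1/M$; conditioning on the one neighbor ($v$) already known to be in $M_i$, the probability that none of the remaining $d(u)-1$ neighbors of $u$ falls in $M_i$ is $(1-1/M)^{d(u)-1}$. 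Summing this ``charge'' over the $d(v)$ neighbors of $v$, then using $(1-1/M)^{k}\approx e^{-k/M}$ and replacing each $d(u)$ by the constant average degree $deg_{avg}$, yields $C_{no}\approx d(v)\exp\{-deg_{avg}/M\}$.

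Combining the two estimates, $C_{no}\ge C_{mir}$ as soon as $d(v)\exp\{-deg_{avg}/M\}\ge M$, i.e. as soon as $d(v)\ge M\exp\{deg_{avg}/M\}$, which is precisely the claimed threshold. The main obstacle is making the combiner bookkeeping airtight: one must argue that $v$'s contribution to $Ch_{msg}$ can be attributed, without double counting against the other senders in $M_i$, exactly to those target neighbors $u$ that receive no other message from $M_i$; and one must justify the two approximations $(1-1/M)^{k}\approx e^{-k/M}$ and $d(u)\approx deg_{avg}$, the latter being exactly where the random-neighbor and constant-average-degree assumptions are invoked. A secondary, harmless point is that measuring $C_{mir}$ by its worst case $M$ rather than its expectation keeps the final inequality clean at the cost of a slightly conservative threshold.
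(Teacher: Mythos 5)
Your proposal is correct and follows the same overall skeleton as the paper's proof: bound the mirroring cost by $M$ via Theorem~\ref{th:mirrorBound}, estimate the number of non-combinable (``not wasted'') messages $v$ would otherwise contribute to $Ch_{msg}$ as $d(v)\cdot\exp\{-deg_{avg}/M\}$, and equate the two to obtain the threshold. The one genuinely different step is how you derive the factor $\exp\{-deg_{avg}/M\}$. The paper fixes the target $u$ and takes the product, over the $n/M$ \emph{other senders} $v_k$ co-located with $v$, of $\Pr\{u\notin\Gamma_{out}(v_k)\}=1-\ell_k/n$; after taking expectations over the degrees $\ell_k$ (using their assumed independence) this becomes $(1-deg_{avg}/n)^{n/M}$, whose $n\to\infty$ limit is exactly $e^{-deg_{avg}/M}$. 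You instead look from the \emph{target's} side: each of $u$'s remaining $d(u)-1$ neighbors hashes into $M_i$ with probability $1/M$, giving $(1-1/M)^{d(u)-1}$. The two are dual counts of the same event (``no second edge between $u$ and $M_i$''), and both land on $e^{-deg_{avg}/M}$, but they invoke slightly different approximations: the paper's is an exact limit in $n$ with $M$ and $deg_{avg}$ fixed, whereas yours additionally needs $(1-1/M)^{k}\approx e^{-k/M}$ (fine for $M$ in the tens to hundreds) and the substitution $d(u)\approx deg_{avg}$, which by convexity of $k\mapsto(1-1/M)^{k}$ only underestimates $C_{no}$ and is therefore conservative in the safe direction. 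The bookkeeping point you flag as the crux is handled in the paper simply by declaring the worst case explicitly---every vertex in $V_i\setminus\{v\}$ sends through $Ch_{msg}$, and a message of $v$ to $u$ is ``wasted'' precisely when some such sender also targets $u$---which is exactly the attribution rule you describe, so there is no double-counting issue to resolve beyond stating that convention.
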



\begin{proof}
Consider a machine $M_i$ that contains a set of $n/M$ vertices, $V_i=\{v_1, v_2, \ldots, v_{n/M}\}$, where each vertex $v_j$ has $\ell_j$ neighbors for $1 \le j \le n/M$. Let us focus on a specific vertex $v_j$ in $M_i$, and infer how large $\ell_j$ should be so that applying mirroring on $v_j$ can reduce the overall communication even when a combiner is used.

Consider an application where all vertices send messages to all their neighbors in each superstep, such as in PageRank computation. Further consider vertex  $u\in\Gamma_{out}(v_j)$. If another vertex $v_k \in V_i \setminus \{v_j\}$ sends messages through $Ch_{msg}$ and $v_k$ also has $u$ as its neighbor, then $v_j$'s message to $u$ is wasted since it can be combined with $v_k$'s message to $u$. We assume the worst case where all vertices in $V_i \setminus \{v_j\}$ send messages through $Ch_{msg}$. Since the neighbors of a vertex in $G$ are randomly chosen among $V$, we have
\begin{displaymath}
\Pr\{u\in\Gamma_{out}(v_k)\}=\ell_k/n,
\end{displaymath}
and therefore,
\begin{eqnarray*}
& & \Pr\{\mbox{$v_j$'s message to $u$ is not wasted}\}\\
& = & \prod_{v_k\in V_i \setminus \{v_j\}} \Pr\{u\not\in\Gamma_{out}(v_k)\}\ =\ \prod_{v_k\in V_i \setminus \{v_j\}}\left(1-\frac{\ell_k}{n}\right).
\end{eqnarray*}
We regard each $\ell_k$ as a random variable whose value is chosen independently from a degree distribution (e.g., power-law degree distribution) with expectation $E[\ell_k]=m/n=deg_{avg}$. Then, the expectation of the above equation is given by
\begin{eqnarray*}
& & E\left[\prod_{v_k\in V_i \setminus \{v_j\}}\left(1-\frac{\ell_k}{n}\right)\right]\ =\ \prod_{v_k\in V_i \setminus \{v_j\}}E\left[1-\frac{\ell_k}{n}\right]\\
& = & \prod_{v_k\in V_i \setminus \{v_j\}}\left(1-\frac{E[\ell_k]}{n}\right)\ =\ \prod_{v_k\in V_i \setminus \{v_j\}}\left(1-\frac{deg_{avg}}{n}\right)\\
& \ge & \prod_{v_k\in V_i}\left(1-\frac{deg_{avg}}{n}\right)\ =\ \left(1-\frac{deg_{avg}}{n}\right)^{n/M}.
\end{eqnarray*}

For large graphs, we have
\begin{eqnarray*}
& & \Pr\{\mbox{$v_j$'s message to $u$ is not wasted}\}\\
& \approx & \lim_{n\to\infty}\left(1-\frac{deg_{avg}}{n}\right)^{n/M}\ =\ \exp\left\{-\frac{deg_{avg}}{M}\right\},
\end{eqnarray*}
where the last step is derived from the fact that $\lim_{n\to\infty}(1-1/n)^n=e^{-1}$.

According to the above discussion, the expected number of $v_j$'s neighbors that are not the neighbors of any other vertex(es) in $M_i$ is equal to $\ell_j\cdot\exp\{-deg_{avg}/M\}$. In other words, if mirroring is not used, $v_j$ needs to send at least $\ell_j\cdot\exp\{-deg_{avg}/M\}$ messages that are not wasted. On the other hand, if mirroring is used, $v_j$ sends at most $M$ messages, one to each mirror. Therefore, mirroring reduces the number of messages if $\ell_j\cdot\exp\{-deg_{avg}/M\}\geq M$, or equivalently, $\ell_j\geq M\cdot\exp\{deg_{avg}/M\}$. To conclude, choosing $\tau=M\cdot\exp\{deg_{avg}/M\}$ as the degree threshold reduces the communication cost.
\end{proof}


Theorem~\ref{th:mirror} states that the choice of $\tau$ depends on the number of workers, $M$, and the average vertex degree, $deg_{avg}$. A cluster usually involves tens to hundreds of workers, while the average degree $deg_{avg}$ of a large real world graph is mostly below 50. Consider the scenario where $M=100$ and $deg_{avg}\leq 50$, then $\tau\leq100e^{0.5}$$=$$165$. This shows that mirroring is effective even for vertices whose degree is not very high. We remark that Theorem~\ref{th:mirror} makes some simplified assumption (e.g., $G$ being a random graph) for ease of analysis, which may not be accurate for a real graph. However, our experiments in Section~\ref{result:mirror} show that Theorem~\ref{th:mirror} is effective on real graphs.

\vspace{2mm}

\noindent{\bf Mirror Construction.} Pregel+ constructs mirrors for all vertices $v$ with $\Gamma_{out}(v)\geq\tau$ after the input graph is loaded and before the iterative computation, although mirror construction can also be pre-computed offline like GraphLab's ghost construction. Specifically, the neighbors in $v$'s adjacency list $\Gamma_{out}$ is grouped by the workers in which they reside. Each group is defined as $N_i=\{u\in\Gamma_{out}(v)\ |\ hash(u)=M_i\}$. Then, for each group $N_i$, $v$ sends $\langle v; N_i\rangle$ to worker $M_i$, and $M_i$ constructs a mirror of $v$ with the adjacency list $N_i$ locally in $M_i$. Each vertex $v_j\in N_i$ also stores the address of $v_j$'s incoming message buffer $I_j$ so that messages can be directly forwarded to $v_j$ by $v$'s mirror in $M_i$.

During graph computation, a vertex $v$ sends message $\langle v, a(v)\rangle$ to its mirror in worker $M_i$. On receiving the message, $M_i$ looks up $v$'s mirror from a hash table using $v$'s ID (similar to $T_{in}$ described in Section~\ref{sec:chmsg}). The message value $a(v)$ is then forwarded to the incoming message buffers of $v$'s neighbors locally in $M_i$.

\vspace{2mm}

\noindent{\bf Handling Edge Fields.} There are some minor changes to Pregel's programming interface for applying mirroring. In Pregel's interface, a vertex calls {\em send\_msg}$(tgt, msg)$ to send an arbitrary message $msg$ to a target vertex $tgt$. With mirroring, a vertex $v$ sends a message containing the value of its attribute $a(v)$ to all its neighbors by calling {\em broadcast}$(a(v))$ instead of calling {\em send\_msg}$(u, a(v))$ for each neighbor $u\in\Gamma_{out}(v)$.

Consider the algorithms described in Section~\ref{sec:alg}. For PageRank, a vertex $v$ simply calls {\em broadcast}$(pr(v)/|\Gamma_{out}(v)|)$; while for {\em Hash-Min}, $v$ calls {\em broadcast}$(min(v))$.

However, there are applications where the message value is not only decided by the sender vertex $v$'s state, but also by the edge that the message is sent along. For example, in Pregel's algorithm for single-source shortest path (SSSP) computation~\cite{pregel}, a vertex sends $(d(v)+\ell(v, u))$ to each neighbor $u\in\Gamma_{out}(v)$, where $d(v)$ is an attribute of $v$ estimating the distance from the source, and $\ell(v, u)$ is an attribute of its out-edge $(v, u)$ indicating the edge length.

To support applications like SSSP, Pregel+ requires that each edge object supports a function {\em relay}$(msg)$, which specifies how to update the value of $msg$ before $msg$ is added to the incoming message buffer $I_i$ of the target vertex $v_i$. If $msg$ is sent through $Ch_{msg}$, {\em relay}$(msg)$ is called on the sender-side before sending. If $msg$ is sent through $Ch_{mir}$, {\em relay}$(msg)$ is called on the receiver-side when the mirror forwards $msg$ to each local neighbor (as the edge field is maintained by the mirror). For example, in Figure~\ref{mirror_combiner}, {\em relay}$(msg)$ is called when $msg$ is passed along a dashed arrow.

By default, {\em relay}$(msg)$ does not change the value of $msg$. To support SSSP, a vertex $v$ calls {\em broadcast}$(d(v))$ in {\em compute}(), and meanwhile, the function {\em relay}$(msg)$ is overloaded to add the edge length $\ell(v, u)$ to $msg$, which updates the value of $msg$ to the required value $(d(v)+\ell(v, u))$.

\vspace{2mm}

\noindent{\bf Summary of Contributions.} GPS does not use message combining, and therefore, its LALP technique are not as effective as our mirroring technique that is reinforced with message combiner. GraphLab's ghost vertex technique creates mirrors for all vertices regardless of the vertex degree, and thus it is also not as effective as our mirroring technique. As far as we know, this is the first work that considers the integration of vertex mirroring and message combining in Pregel's computing model. In addition, we also identified the tradeoff between vertex mirroring and message combining in message reduction, and provided a cost model to automatically select vertices for mirroring so as to minimize the number of messages. As we shall see in our experiments in Section~\ref{result:mirror}, the mirroring threshold computed by our cost model in Theorem~\ref{th:mirror} achieves near-optimal performance. In addition, we also cope with the case where the message value depends on the edge field, which is not supported by GPS's LALP technique.

\section{The Request-Respond Paradigm} \label{sec:req}

In Sections~\ref{sec:intro}, \ref{ssec:sv} and~\ref{ssec:msf}, we have shown that bottleneck vertices can be generated by algorithm logic even if the input graph has no high-degree vertices. For handling such bottleneck vertices, the mirroring technique of Section~\ref{sec:mirror} is not effective. To this end, we design our second message reduction technique, which extends the basic Pregel framework with a new \emph{request-respond} functionality.

We illustrate the concept using the algorithms described in Section~\ref{sec:alg}. Using the request-respond API, {\em attribute broadcast} in Section~\ref{ssec:broadcast} is straightforward to implement: in superstep~1, each vertex $v$ sends requests to each neighbor $u \in\Gamma_{out}(v)$ for $a(u)$; in superstep~2, the vertex $v$ simply obtains $a(u)$ responded by each neighbor $u$, and constructs $\widehat{\Gamma}_{out}(v)$. Similarly, for the {\em S-V} algorithm in Section~\ref{ssec:sv}, when a vertex $v$ needs to obtain $D[w]$ from vertex $w=D[v]$, it simply sends a request to $w$ so that $D[w]$ can be used in the next superstep; for the MSF algorithm in Section~\ref{ssec:msf}, a vertex $v$ simply sends a request to its supervertex $D[v]=s$ so that $D[s]$ can be used to update $D[v]$ in the next superstep.

\vspace{2mm}

\noindent{\bf Request-Respond Message Channel.} \  We now explain in detail how Pregel+ supports the request-respond API. The request-respond paradigm supports all the functionality of Pregel. In addition, it supplements the vertex-to-vertex message channel $Ch_{msg}$ with a \emph{request-respond message channel}, denoted by $Ch_{req}$.

Figure~\ref{reqresp} illustrates how requests and responses are exchanged between two machines $M_i$ and $M_j$ through $Ch_{req}$. Specifically, each machine maintains $M$ request sets, where $M$ is the number of machines, and each request set $S_{\mbox{\scriptsize to }k}$ stores the requests to vertices in machine $M_k$. In a superstep, a vertex $v$ in machine $M_j$ may call {\em request}($u$) in its {\em compute}() function to send request to vertex $u$ for its attribute value $a(u)$ (which will be used in the next superstep). Let $hash(u)=i$, then the requested vertex $u$ is in machine $M_i$, and hence $u$ is added to the request set $S_{\mbox{\scriptsize to }i}$ of $M_j$. Although many vertices in $M_j$ may send request to $u$, only one request to $u$ will be sent from $M_j$ to $M_i$ since $S_{\mbox{\scriptsize to }i}$ is a (hash) set that eliminates redundant elements.

\begin{figure}[!t]
    \centering
    \includegraphics[width=\columnwidth]{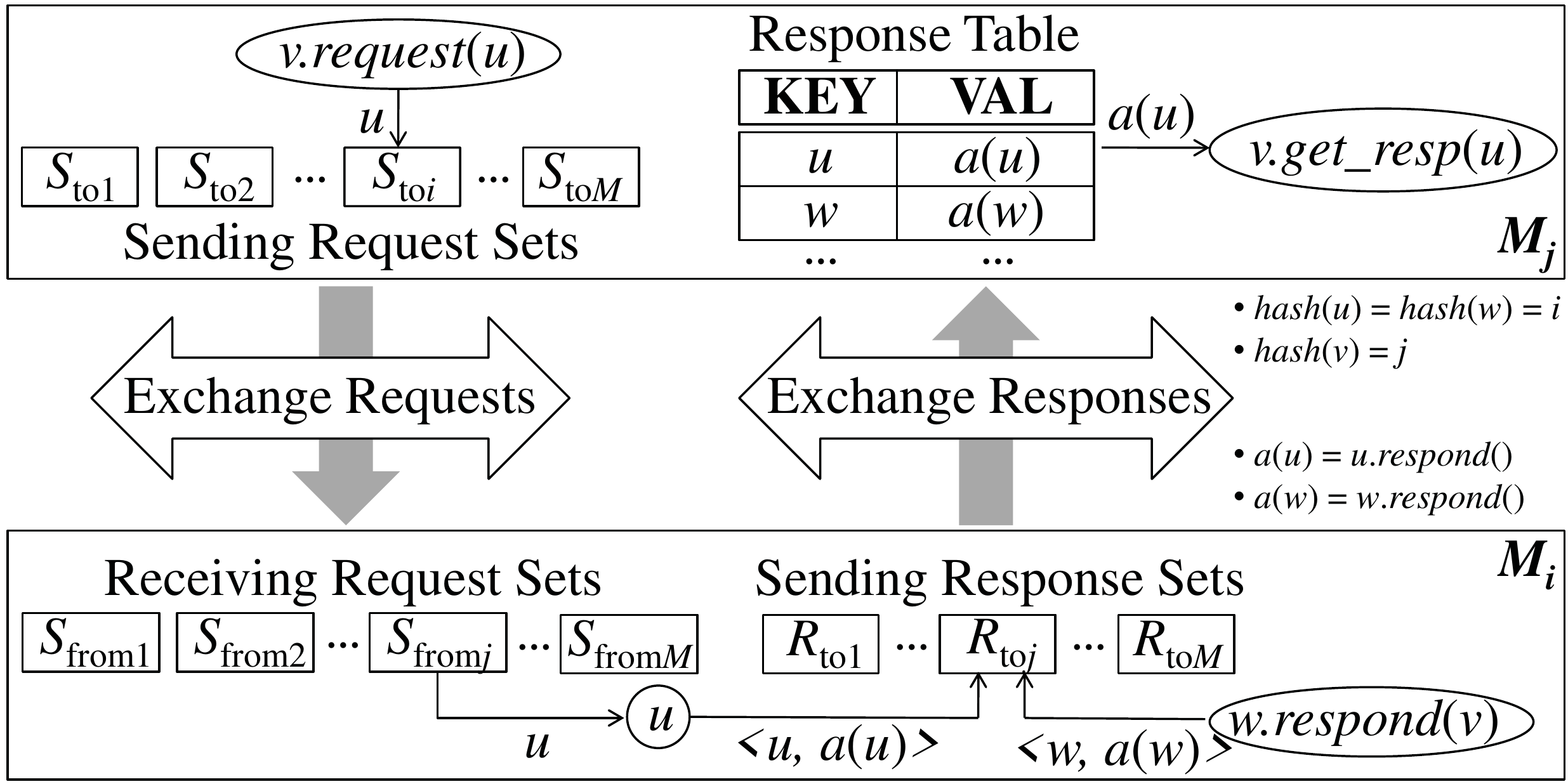}
    \vspace{-4mm}
    \caption{Illustration of request-respond paradigm}\label{reqresp}
    \vspace{-2mm}
\end{figure}

After {\em compute}() is called for all active vertices, the vertex-to-vertex messages are first exchanged through $Ch_{msg}$. Then, each machine sends each request set $S_{\mbox{\scriptsize to }k}$ to machine $M_k$. After the requests are exchanged, each machine receives $M$ request sets, where set $S_{\mbox{\scriptsize from }k}$ stores the requests sent from machine $M_k$. In the example shown in Figure~\ref{reqresp}, $u$ is contained in the set $S_{\mbox{\scriptsize from} j}$ in machine $M_i$, since vertex $v$ in machine $M_j$ sent request to $u$.

Then, a response set $R_{\mbox{\scriptsize to} k}$ is constructed for each request set $S_{\mbox{\scriptsize from }k}$ received, which is to be sent back to machine $M_k$. In our example, the requested vertex, $u \in S_{\mbox{\scriptsize from} j}$, calls a user-specified function {\em respond}() to return its specified attribute $a(u)$, and adds the entry $\langle u, a(u)\rangle$ to the response set $R_{\mbox{\scriptsize to} j}$.

Once the response sets are exchanged, each machine constructs a hash table from the received entries. In the example shown in Figure~\ref{reqresp}, the entry $\langle u, a(u)\rangle$ is received by machine $M_j$ since it is in the response set $R_{\mbox{\scriptsize to} j}$ in machine $M_i$. The hash table is available for the next superstep, where vertices can access their requested value in their {\em compute}() function. In our example, vertex $v$ in machine $M_j$ may call {\em get\_resp}($u$) in the next superstep, which looks up $u$'s attribute $a(u)$ from the hash table.

The following theorem shows the effectiveness of the request-respond paradigm for message reduction.


\begin{theorem}\label{th:reqBound}
Let $\{v_1, v_2, \ldots, v_\ell\}$ be the set of requesters that request the attribute $a(u)$ from a vertex $u$. Then, the request-respond paradigm reduces the total number of messages from $2\ell$ in Pregel's vertex-to-vertex message passing framework to $2\min(M,\ell)$, where $M$ is the number of machines.
\end{theorem}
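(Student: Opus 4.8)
The plan is to analyze the request-respond paradigm by tracking how many messages flow in each direction separately, and then summing. First I would count the messages in Pregel's vanilla vertex-to-vertex framework as a baseline: each of the $\ell$ requesters $v_i$ must send one request message to $u$ (that is $\ell$ messages in the request direction), and $u$ must send one response message back to each requester $v_i$ (that is another $\ell$ messages in the response direction), for a total of $2\ell$ messages. This establishes the ``from $2\ell$'' part of the statement.

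Next I would bound the messages under the request-respond channel $Ch_{req}$. The key structural fact, which comes directly from the channel design described just above the theorem, is that each machine maintains one request set $S_{\mathrm{to}\,k}$ per destination machine, implemented as a hash set that discards duplicates. Hence all requesters for $a(u)$ that reside in the same machine contribute only a single entry for $u$ in that machine's request set to $M_{hash(u)}$. Therefore the number of distinct request messages actually sent across the network for $a(u)$ equals the number of distinct machines that host at least one of $v_1,\dots,v_\ell$. This count is at most $M$ (there are only $M$ machines) and at most $\ell$ (each requester lies in some machine, so there cannot be more hosting machines than requesters), i.e.\ at most $\min(M,\ell)$. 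Symmetrically, for each such request received, machine $M_{hash(u)}$ places one entry $\langle u, a(u)\rangle$ in the response set $R_{\mathrm{to}\,k}$ destined for the requesting machine $M_k$, so the number of response messages is exactly the number of requesting machines, again at most $\min(M,\ell)$. Summing the two directions gives the claimed bound of $2\min(M,\ell)$.

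I do not anticipate a serious obstacle here; the argument is essentially a counting argument that follows from the deduplication property of the hash-based request sets and the fact that there are only $M$ machines. The one point requiring mild care is making precise that the bound $\min(M,\ell)$ is tight/correct in both extreme regimes — when $\ell \le M$ the requesters might all sit in distinct machines, so the $\ell$ bound can dominate, whereas when $\ell > M$ the pigeonhole principle forces the machine count down to at most $M$ — and I would state this via the single inequality ``number of hosting machines $\le \min(M,\ell)$'' so that both cases are handled uniformly. I would also briefly note that redundant requests from the same machine to $u$ are collapsed before transmission (so intra-machine duplication costs nothing network-wise), which is the crux of why the improvement over $2\ell$ can be dramatic when $\ell \gg M$, matching the discussion of the S-V algorithm in Section~\ref{ssec:sv}.
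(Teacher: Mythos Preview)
Your proposal is correct and follows essentially the same argument as the paper's own proof: both rely on the facts that (i) each machine sends at most one deduplicated request to $u$, (ii) $u$ sends at most one response per requesting machine, and (iii) the number of machines hosting a requester is at most $\min(M,\ell)$. Your write-up is simply more detailed than the paper's one-sentence version, explicitly spelling out the $2\ell$ baseline and the two regimes $\ell\le M$ and $\ell>M$.
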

\begin{proof}
The proof follows directly from the fact that each machine sends at most 1 request to $u$ even though there may be more than 1 requester in that machine, and that at most 1 respond from $u$ is sent to each machine that makes a request to $u$, and that there are at most $\min(M,\ell)$ machines that contain a requester.
\end{proof}


In the worst case, the request-respond paradigm uses the same number of messages as Pregel's vertex-to-vertex message passing. But in practice, many Pregel algorithms (e.g., those described in Sections~\ref{ssec:sv} and~\ref{ssec:msf}) have bottleneck vertices with a large number of requesters, leading to imbalanced workload and long elapsed running time. In such cases, our request-respond paradigm effectively bounds the number of messages to the number of machines containing the requesters and eliminates the imbalanced workload.

\vspace{2mm}

\noindent{\bf Explicit Responding.} In the above discussion, a vertex $v$ simply calls {\em request}($u$) in one superstep, and it can then call {\em get\_resp}($u$) in the next superstep to get $a(u)$. All the operations including request exchange, response set construction, response exchange, and response table construction are performed by Pregel+ automatically and are thus transparent to users. We name the above process as \emph{implicit responding}, where a responder does not know the requester until a request is received.

When a responder $w$ knows its requesters $v$, $w$ can explicitly call {\em respond}($v$) in {\em compute}(), which adds $\langle w, w.${\em respond}()$\rangle$ to the response set $R_{\mbox{\scriptsize to }j}$ where $j=hash(v)$. This process is also illustrated in Figure~\ref{reqresp}. Explicit responding is more cost-efficient since there is no need for request exchange and response set construction.

Explicit responding is useful in many applications. For example, to compute PageRank on an \emph{undirected} graph, a vertex $v$ can simply call {\em respond}($u$) for each $u\in\Gamma(v)$ to push $a(v)=pr(v)/|\Gamma(v)|$ to $v$'s neighbors; this is because in the next superstep, vertex $u$ knows its neighbors $\Gamma(u)$, and can thus collect their responses. Similarly, in {\em attribute broadcast}, if the input graph is undirected, each vertex $v$ can simply push its attribute $a(v)$ to its neighbors. Note that data pushing by explicit responding requires less messages than by Pregel's vertex-to-vertex message passing, since responds are sent to machines (more precisely, their response tables) rather than individual vertices. 

\vspace{2mm}

\noindent{\bf Programming Interface.} Pregel+ extends the vertex class in Pregel's interface~\cite{pregel} by requiring users to specify an additional template argument $<${\em R}$>$, which indicates the type of the attribute value that a vertex responds.

In {\em compute}(), a vertex can either pull data from another vertex $v$ by calling {\em request}($v$), or push data to $v$ by calling {\em respond}($v$). The attribute value that a vertex returns is defined by a user-specified abstract function {\em respond}(), which returns a value of type $<${\em R}$>$. Like {\em compute}(), one may program {\em respond}() to return different attributes of a vertex in different supersteps according to the algorithm logic of the specific application. Finally, a vertex may call {\em get\_resp}($v$) in {\em compute}() to get the attribute of $v$, if it is pushed into the response table in the previous superstep.

\section{Experimental Results}\label{sec:results}
We now evaluate the effectiveness of our message reduction techniques. We ran our experiments on a cluster of 16 machines, each with 24 processors (two Intel Xeon E5-2620 CPU) and 48GB RAM. One machine is used as the master, while the other 15 machines act as slaves. The connectivity between any pair of nodes in the cluster is 1Gbps.

We used five real-world datasets, as shown in Figure~\ref{data}: (1){\em WebUK}\footnote[2]{http://law.di.unimi.it/webdata/uk-union-2006-06-2007-05}: a web graph generated by combining twelve monthly snapshots of the .uk domain collected for the DELIS project; (2){\em LiveJournal} ({\em LJ}) \footnote[3]{http://konect.uni-koblenz.de/networks/livejournal-groupmemberships}: a bipartite network of LiveJournal users and their group memberships; (3){\em Twitter}\footnote[4]{http://konect.uni-koblenz.de/networks/twitter\_mpi}: Twitter who-follows-who network based on a snapshot taken in 2009; (4){\em BTC}\footnote[5]{http://km.aifb.kit.edu/projects/btc-2009/}: a semantic graph converted from the Billion Triple Challenge 2009 RDF dataset; (5)\emph{USA}\footnote[6]{http://www.dis.uniroma1.it/challenge9/download.shtml}: the USA road network.

\begin{figure}[!t]
    \centering
    \includegraphics[width=\columnwidth]{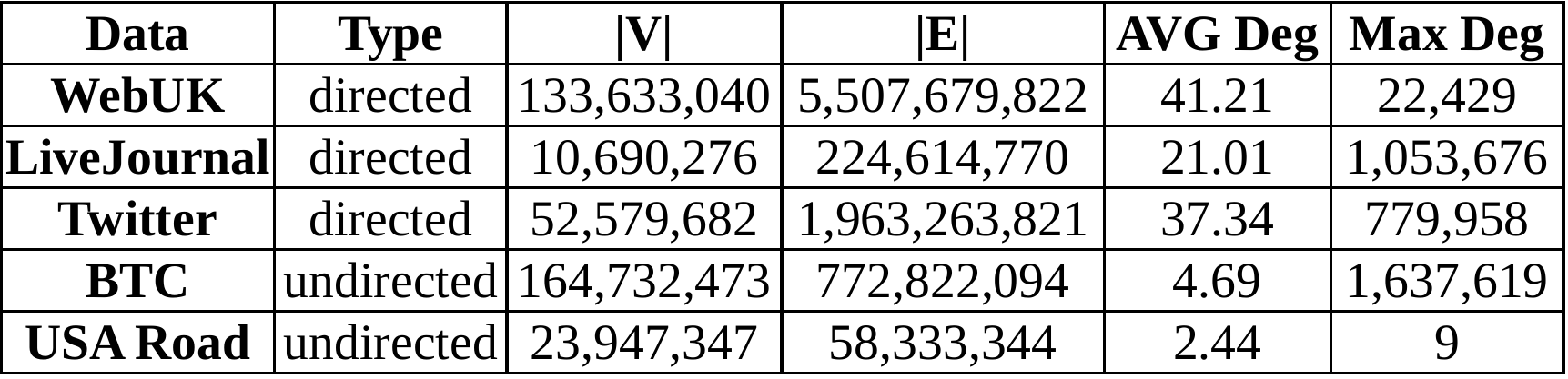}
    \vspace{-5mm}
    \caption{Datasets (M = million)}\label{data}
    \vspace{-4mm}
\end{figure}

\begin{figure*}[!t]
 \centering
\includegraphics[width=1.9\columnwidth]{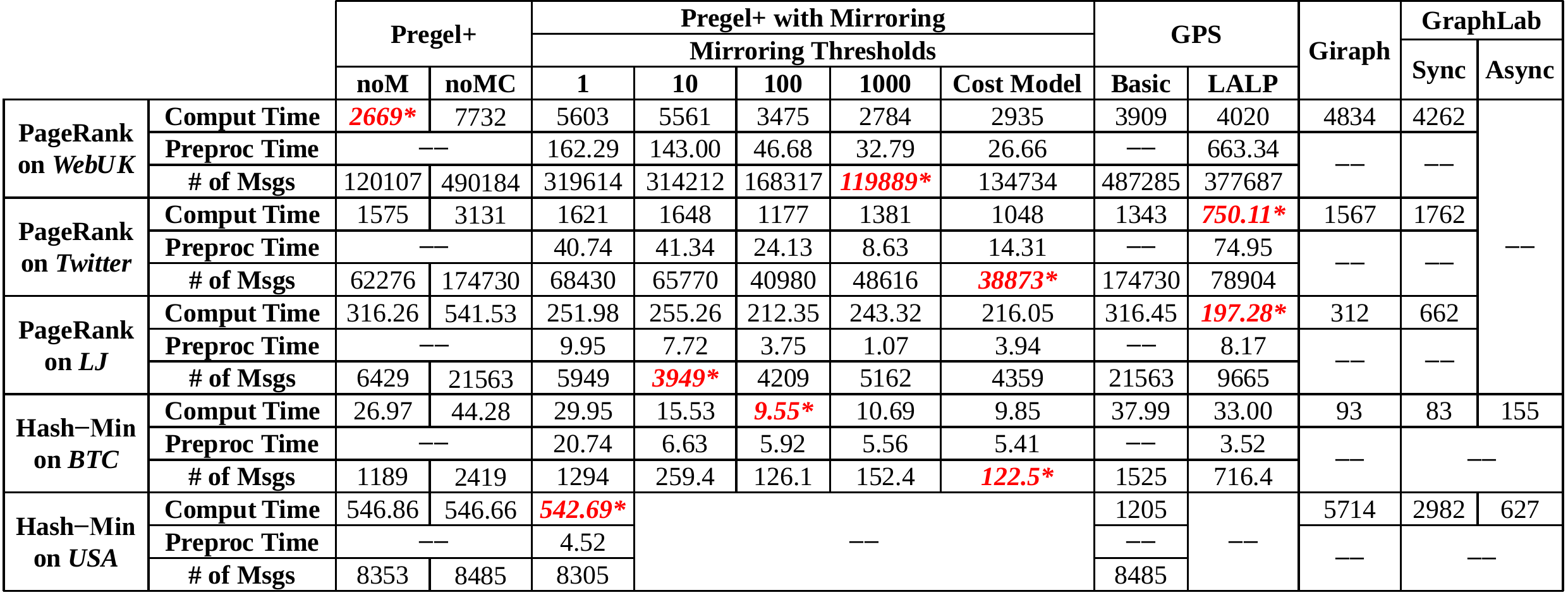}
\caption{Effects of mirroring ($\star$: best result; Comput/Preproc time: Computation/Preprocessing time in sec; \# of Msgs: \# of messages in millions)}\label{mirexp}
\vspace{-3mm}
\end{figure*}

{\em LJ}, {\em Twitter} and {\em BTC} have skewed degree distribution; {\em WebUK}, {\em LJ} and {\em Twitter} have relatively high average degree; \emph{USA} and \emph{WebUK} have a large diameter.

\vspace{2mm}

\noindent \textbf{Pregel+ Implementation. } Pregel+ is implemented in C/C++ as a group of header files, and users only need to include the necessary base classes and implement the application logic in their subclasses. Pregel+ communicates with HDFS through libhdfs, a JNI based C API for HDFS. Each worker is simply an MPI process and communications are implemented using MPI communication primitives. While one may deploy Pregel+ with any Hadoop and MPI version, we use Hadoop 1.2.1 and MPICH 3.0.4 in our experiments. All programs are compiled using GCC 4.4.7 with -O2 option enabled.

All the system source codes, as well as the source codes of the algorithms discussed in this paper, can be found in {\small \url{http://www.cse.cuhk.edu.hk/pregelplus}}.

\subsection{Effectiveness of Mirroring}    \label{result:mirror}
Figure~\ref{mirexp} reports the performance gain by mirroring. We measure the gain by comparing with (1)Pregel+ without both mirroring and combiner, denoted by \textbf{Pregel-noMC}; (2)Pregel+ with combiner but without mirroring, denoted by \textbf{Pregel-noM}; and (3)GPS~\cite{gps} with and without LALP.  The request-respond technique is not applied in Pregel+ for this set of experiments. As a reference, we also report the performance of Giraph~1.0.0 \cite{giraph} (with combiner) and GraphLab~2.2 (which includes PowerGraph~\cite{powergraph}).

We test the mirroring thresholds 1, 10, 100, 1000, and the one automatically set by the cost model given by Theorem~\ref{th:mirror} (which is 199, 165, 62, 126, for {\em WebUK}, {\em Twitter}, {\em LJ}, {\em BTC}, respectively). But for the {\em USA} road network, its maximum vertex degree is only 9 and thus we do not apply mirroring with large thresholds. For GPS, we follow~\cite{sys_vldb} and fix the threshold of LALP as 100. This is a reasonable choice, since \cite{sys_vldb} reports that this threshold achieves good performance in general, and we find that the best performance after tuning the threshold is very close to the performance when the threshold is 100. We also report the preprocessing time of constructing mirrors for Pregel+ and that of LALP for GPS in rows marked by ``Preproc Time''. We also report the number of messages sent by Pregel+ and GPS (note that Giraph does not report the number of messages, but the number should be the same as that of Pregel-noMC and Pregel-noM; while GraphLab does not employ message passing).

We ran PageRank on the three directed graphs, and {\em Hash-Min} on the two undirected graphs in Figure~\ref{data}. For PageRank computation, we use aggregator to check whether every vertex changes its PageRank value by less than 0.01 after each superstep, and terminate if so. The computation takes 89, 89 and 96 supersteps on {\em WebUK}, {\em Twitter} and {\em LJ}, respectively, before convergence. We do not run GraphLab in asynchronous mode for PageRank, since its convergence condition is different from the synchronous version and hence leads to different PageRank results.

\vspace{2mm}

\noindent \textbf{Mirroring in Pregel+. } As Figure~\ref{mirexp} shows, mirroring significantly improves the performance of Pregel-noM, in terms of the reduction in both running time and message number. The improvement is particularly obvious for the graphs, {\em Twitter}, {\em LJ}, and {\em BTC}, which have highly skewed degree distribution. Thus, the result also demonstrates the effectiveness of mirroring in workload balancing.

Mirroring is not so effective for PageRank on {\em WebUK}, for which Pregel-noM has the best performance. The number of messages is only slightly decreased when mirroring threshold $\tau= 1000$, and yet it is still slower than Pregel-noM. This is because messages sent through $Ch_{mir}$ are intercepted by mirrors which incurs additional cost. Since the degree of the majority of the vertices in {\em WebUK} is not very high, mirroring does not significantly reduce the number of messages, and thus, the additional cost of $Ch_{mir}$ is not paid off.

The results also show that the mirroring threshold given by our cost model achieves either the best performance, or close to the performance of the best threshold tested. The one-off preprocessing time required to construct the mirrors is also short compared with the computation time.

\vspace{2mm}

\noindent \textbf{Comparison with Other Systems. } Figure~\ref{mirexp} shows that Pregel+ without mirroring (i.e., Pregel-noM) is already faster than both Giraph and GraphLab, which verifies that our $Ch_{msg}$ implementation is efficient, and thus the performance gain by mirroring is not an over-claimed improvement gained over a slow implementation.

Compared with GPS, the reduction in both message number and running time achieved by \emph{the integration of mirroring and combiner} in Pregel+ is significantly more than that achieved by \emph{LALP alone} in GPS, which can be observed from (1)Pregel+ with mirroring vs.\ Pregel-noMC, and (2)GPS with LALP v.s.\ GPS without LALP. In contrast to the claim in~\cite{gps} that message combining is not effective, our result clearly demonstrates the benefits of integrating mirroring and combiner, and hence highlights the importance of our theoretical analysis on the tradeoff between mirroring and  message combining (i.e., Theorem~\ref{th:mirror}).

However, we notice that GPS is sometimes faster than Pregel+ even though much more messages are exchanged. We found it hard to explain and so we studied the codes of GPS to explore the reason, which we explain below. GPS requires that vertex IDs should be integers that are contiguous starting from $0, 1, \cdots, |V|$; while other systems allow vertex IDs to be of any user-specified type as long as a hash function is provided (for calculating the ID of the worker that a vertex resides in). As a result of the dense ID representation, each worker in GPS simply maintains the incoming message buffers of the vertices by an array, and when a worker receives a message targeted at vertex $tgt$, it is put into $tgt$'s incoming message buffer (i.e., $I_{tgt}$) whose position in the array can be directly computed from $tgt$. On the contrary, systems like Pregel+ and Giraph need to look up $I_{tgt}$ from a hash table using key $tgt$, which has extra cost for each message exchanged.

We remark that there are good reasons to require vertex IDs to take arbitrary type, rather than to hard-code them as contiguous integers. For example, the Pregel algorithm in~\cite{ppa_vldb} for computing bi-connected components constructs an auxiliary graph from the input graph, and each vertex of the auxiliary graph corresponds to an edge $(u, v)$ of the input graph. While we can simply use integer pair as vertex ID in Pregel+, using GPS requires extra effort from programmers to relabel the vertices of the auxiliary graph with contiguous integer IDs, which can be costly for a large graph. We note that, if one desires, he can easily implement GPS's dense vertex ID representation in Pregel+ to further improve the performance for certain algorithms, but this is not the focus of our work which studies message reduction techniques.

\subsection{Effectiveness of Request-Respond Technique}\label{result:req}
\begin{figure}[!t]
    \centering
    \includegraphics[width=\columnwidth]{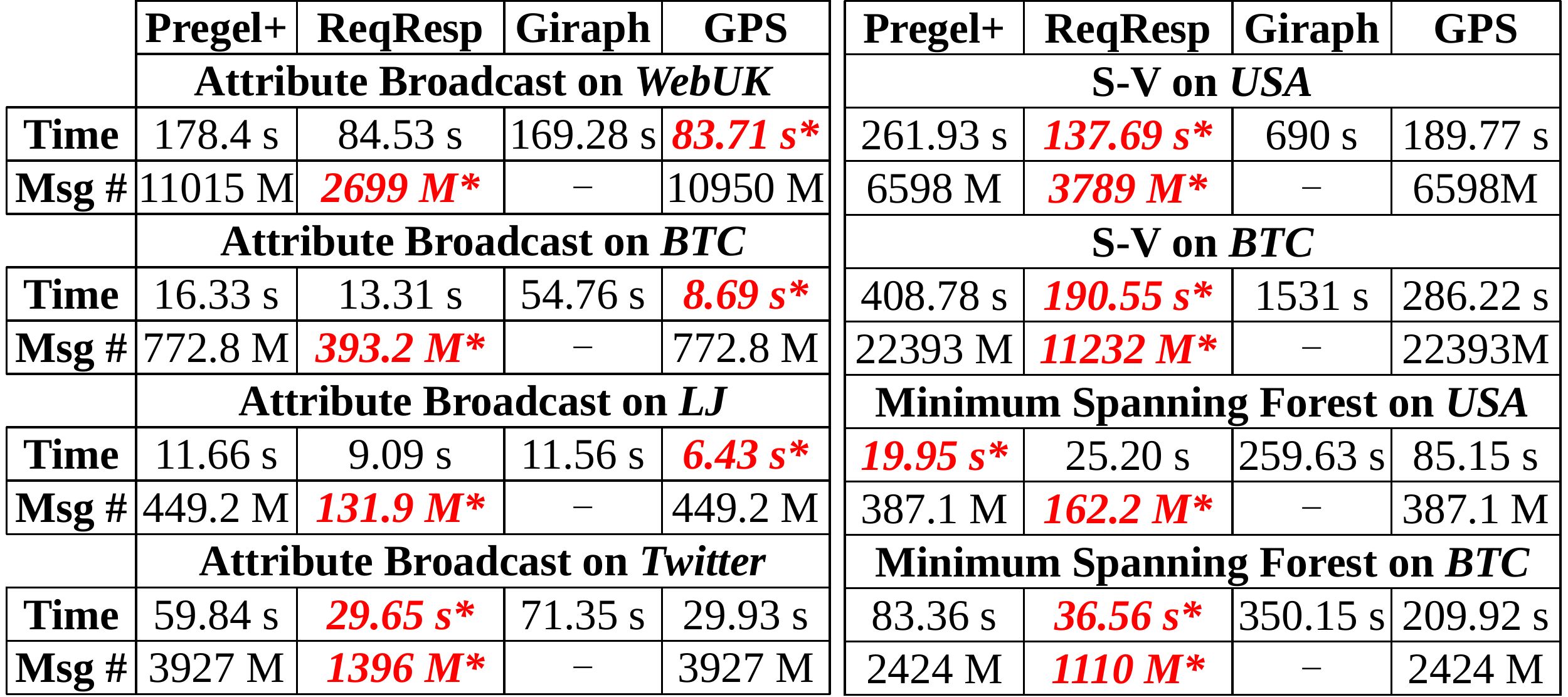}
        \vspace{-3mm}
    \caption{Effects of the request-respond technique}\label{reqexp}
    \vspace{-3mm}
\end{figure}

Figure~\ref{reqexp} reports the performance gained by the request-respond technique. We test the three algorithms in Section~\ref{sec:alg} to which the request-respond technique is applicable: {\em attribute broadcast}, {\em S-V} and minimum spanning forest. We also include Giraph and GPS as a reference. We do not include GraphLab since the algorithms cannot be easily implemented in GraphLab (e.g., it is not clear how a vertex $v$ can communicate with a non-neighbor $D[v]$ as in {\em S-V} and minimum spanning forest).

The results show that  Pregel+ with request-respond, denoted by \textbf{ReqResq}, uses significantly less messages. For example, for {\em attribute broadcast} on {\em WebUK}, ReqResq reduces the message number from 11,015 million to only 2,699 million. ReqResq also records the shortest running time except in a few cases where GPS is faster due to the same reason given in Section~\ref{result:mirror}. Another exception is when computing minimum spanning forest on {\em USA}, where Pregel+ is faster without request-respond. This is because vertices in {\em USA} have very low degree, rendering the request-respond technique ineffective, and the additional computational overhead is not paid off by the reduction in message number.

\section{Conclusions}  \label{sec:conclude}

We presented two techniques to reduce the amount of communication and to eliminate skewed communication workload. The first technique, mirroring, eliminates communication bottlenecks caused by high vertex degree, and is transparent to programming. The second technique is a new request-respond paradigm, which eliminates bottlenecks caused by program logic, and simplifies the programming of many Pregel algorithms. Our experiments on large real-world graphs verified that our techniques are effective  in reducing the communication cost and overall computation time.

{\small

\bibliographystyle{abbrv}

\bibliography{ref_pullgel}
}

\end{document}